 \newtheorem{DE}{Definition}[section]
\newcommand {\sm} {\setminus}
 \newcommand{\qed}{\relax\ifmmode\hskip2em\Box\else\unskip\nobreak\hfill$\Box$\fi}
\newtheorem{theorem}[DE]{Theorem}
\newtheorem{lemma}[DE]{Lemma}
\theoremstyle{break}\theorembodyfont{\rmfamily}}
\theoremstyle{break}\theorembodyfont{\rmfamily}}
\newcounter{claim}
\newenvironment{proof}[1][]%
	{\noindent {\setcounter{claim}{0}\it Proof. }{#1}{}}{\qed\vspace{2ex}}
	{\refstepcounter{claim}\vspace{1ex}\noindent {(\it\arabic{claim}) {#1}{}}\it}{\vspace{1ex}}
	{\noindent {}{#1}{}}{ This proves~(\arabic{claim}).\vspace{1ex}}
\DeclareMathOperator{\rw}{rw}
\DeclareMathOperator{\rk}{rank}
\title{A class of graphs with large rankwidth}
\author{Ch\'inh T. Ho\`ang\thanks{Department of Physics and Computer
    Science, Wilfrid Laurier University, Waterloo, Ontario, Canada.}~~and~Nicolas Trotignon\thanks{Univ Lyon, EnsL, UCBL, CNRS,  LIP,
    F-69342, LYON Cedex 07, France. Partially supported by the
     LABEX MILYON (ANR-10-LABX-0070) of Universit\'e de Lyon, within
     the program ‘‘Investissements d'Avenir’’ (ANR-11-IDEX-0007)
     operated by the French National Research Agency (ANR) and by Agence
    Nationale de la Recherche (France) under research grant ANR
    DIGRAPHS ANR-19-CE48-0013-01.}}
\begin{document}

\maketitle

\begin{abstract}
  We describe several graphs with arbitrarily large rankwidth (or
  equivalently with arbitrarily large cliquewidth).  Korpelainen,
  Lozin, and Mayhill [Split permutation graphs, {\em Graphs and
    Combinatorics}, 30(3):633--646, 2014] proved that there exist
  split graphs with Dilworth number~2 with arbitrarily large
  rankwidth, but without explicitly constructing them. We provide an
  explicit construction.  Maffray, Penev, and Vu\v skovi\'c [Coloring
  rings, {\it Journal of Graph Theory} 96(4):642-683, 2021] proved
  that graphs that they call rings on $n$ sets can be colored in
  polynomial time. We show that for every fixed integer
  $n\geq 3$, there exist rings on $n$ sets with arbitrarily large
  rankwidth.  When $n\geq 5$ and $n$ is odd, this provides a new
  construction of even-hole-free graphs with arbitrarily large
  rankwidth.
\end{abstract}

\section{Introduction}

The \emph{cliquewidth} of a graph is an integer intended to measure
how complex the graph is.  It was defined by Courcelle, Engelfriet and
Rozenberg in~\cite{CER-93} and is successful in the sense that many
hard problems on graphs become tractable on graph classes of bounded
cliquewidth~\cite{CMR00}.  This includes for instance finding the
largest clique or independent set, and deciding if a colouring with at
most $k$ colors exists (for fixed $k\in \mathbb N$).  This makes
cliquewidth particularly interesting in the study of algorithmic
properties of hereditary graph classes.

The notion of \emph{rankwidth} was defined by Oum and Seymour
in~\cite{OS-rw}, where they use it for an approximation algorithm for
cliquewidth. They also show that rankwidth and cliquewidth are
equivalent, in the sense that a graph class has bounded rankwidth if,
and only if, it has bounded cliquewidth.  In the rest of this article,
we only use rankwidth, and we therefore refer to results in the
literature with this notion, even if in the original papers, the notion of
cliquewidth is used (recall the two notions are equivalent as long as
we care only about a class being bounded by the parameter or not).

Determining whether a given class of graphs has bounded rankwidth has
been well studied lately, and let us survey the main results in this
direction. For every class of graphs defined by forbidding one or two
graphs as induced subgraphs, it is known whether the class has bounded
or unbounded rankwidth, apart for a very small number of open cases,
see~\cite{DBLP:journals/algorithmica/BonamyBDJPP21} for the most recent
results and~\cite{DBLP:journals/corr/abs-1901-00335} for a survey.

Similar results were obtained for chordal
graphs~\cite{DBLP:journals/jgt/BrandstadtDHP17} and split
graphs~\cite{DBLP:journals/dam/BrandstadtDHP16}. Recall that a
\emph{chordal graph} is a graph such that every cycle of length at
least~4 has a chord, and a \emph{split graph} is a graph whose vertex
set can be partitioned into a clique and a stable set.

A graph is \emph{even-hole-free} if every cycle of even length has a
chord. Determining whether several subclasses of even-hole-free graphs
have bounded rankwidth also attracted some
attention. See~\cite{DBLP:journals/jgt/CameronCH18,DBLP:journals/dm/CameronSHV18,PhDLe17,DBLP:journals/dam/SilvaSS10}
for subclasses of bounded rankwidth
and~\cite{adlerLMRTV:rwehf,DBLP:journals/jgt/SintiariT21} for
subclasses with unbounded rankwidth.

When $G$ is a graph and $x$ a vertex of $G$, we denote by $N(x)$ the
set of all neighbors of $x$.  We set $N[x]=N(x)\cup \{x\}$.  The
\emph{Dilworth number} of a graph $G$ is the maximum number of
vertices in a set $D$ such that for all distinct $x, y\in D$, the two
sets $N(x)\setminus N[y]$ and $N(y)\setminus N[x]$ are non-empty.
In~\cite{DBLP:journals/gc/KorpelainenLM14}, it is proved that split graphs
with Dilworth number~2 and arbitrarily large rankwidth exist. It
should be pointed out that only their
existence is proved, no explicit construction is given.

For an integer $n\geq 3$, a \emph{ring on $n$ sets} is a graph $G$
whose vertex set can be partitioned into $n$ cliques
$X_1, \dots, X_n$, with three additional properties:
\begin{itemize}
\item For all $i\in \{1, \dots, n\}$ and all $x, x'\in X_i$, either
  $N[x]\subseteq N[x']$ or $N[x']\subseteq N[x]$.
\item For all $i\in \{1, \dots, n\}$ and all $x\in X_i$, $N(x)
  \subseteq X_{i-1} \cup X_i \cup X_{i+1}$ (where the addition of
  subscripts is modulo $n$). 
\item For all $i\in \{1, \dots, n\}$, there exists a vertex $x\in X_i$
  that is adjacent to all vertices of $X_{i-1}\cup X_{i+1}$.  
\end{itemize}

Rings were studied in~\cite{maffray2019coloring}, where a polynomial
time algorithm to color them is given.  The notion of ring
in~\cite{maffray2019coloring} is slightly more restricted than ours
(at least 4 sets are required), but it makes no essential difference here.
Observe that the Dilworth number of a ring on $n$ sets is at most
$n$. As explained in~\cite{PhDLe17}, a construction
from~\cite{DBLP:journals/ijfcs/GolumbicR00} shows that there exist
rings with arbitrarily large rankwidth. Also
in~\cite{DBLP:journals/ejc/KwonPS20}, it is proved that the so-called
\emph{twisted chain graphs}, that are similar in some respect to rings
on 3 sets, have unbounded rankwidth.  However, for any fixed integer
$n$, it is not known whether there exist rings on $n$ sets with
arbitrarily large rankwidth.

Our main result is a new way to build graphs with arbitrarily large
rankwidth. The construction has some flexibility, so it allows us to
reach several goals. First, we give split graphs with Dilworth
number~2 with arbitrarily large rankwidth, and we describe them
explicitly. By ``tuning'' the construction differently, we will show that for each
integer $n\geq 3$, there exist rings on $n$ sets with arbitrarily large
rankwidth.  For odd integers $n$, this provides new even-hole-free
graphs with arbitrarily large rankwidth. It should be pointed out that
our construction does not rely on modifying a grid (a classical method
to obtain graphs with arbitrarily large rankwidth).

In Section~\ref{s:rw}, we recall the definition of rankwidth together
with lemmas related to it. In Section~\ref{s:mac}, we give the main
ingredients needed to construct our graphs, called \emph{carousels}, to be
defined in Section~\ref{s:carousel}.  In Section~\ref{s:sketch}, we
give an overview of the proof that carousels have unbounded
rankwidth.  In Section~\ref{s:propagate}, we give several technical
lemmas about the rank of matrices that arise from partitions of the
vertices in  carousels.  In Sections~\ref{s:evenC} and~\ref{s:oddC},
we prove that carousels have unbounded rankwidth (we need two sections
because there are two kinds of carousels, the even ones and the odd
ones).  In Section~\ref{s:app}, we show how to tune carousels in order
to obtain split graphs with Dilworth number~2 or rings.  We
conclude the paper with open questions  in Section~\ref{s:open}.

\section{Rankwidth}
\label{s:rw}

When $G$ is a graph and $(Y, Z)$ a partition of some subset of $V(G)$,
we denote by $M_{G, Y, Z}$ the matrix $M$ whose rows are indexed by
$Y$, whose columns are indexed by $Z$ and such $M_{y, z} = 1$ if
$yz\in E(G)$ and $M_{y, z} = 0$ if $yz\notin E(G)$.  We define
$\rk_G(Y, Z) = \rk(M_{G, Y, Z})$, where the rank is computed on the
binary field.  When the context is clear, we may
refer to $\rk_G({Y, Z})$ as the \emph{rank} of $(Y, Z)$.

A \emph{tree} is a connected acyclic graph. A \emph{leaf} of a tree
is a node incident to exactly one edge. For a tree $T$, we let $L(T)$
denote the set of all leaves of $T$.  A tree node that is not a leaf
is called \emph{internal}.  A tree is \emph{cubic}, if it has at least
two nodes and every internal node has degree $3$.

A \emph{tree decomposition} of a graph $G$ is a cubic tree $T$, such
that $L(T)=V(G)$.  Note that if $\left|V(G)\right|\leq 1$, then $G$
has no tree decomposition.  For every edge $e\in E(T)$, $T\setminus e$
has two connected components, $Y_e$ and $Z_e$ (that we view as trees).
The \emph{width} of an edge $e\in E(T)$ is defined as
$\rk_G({L(Y_e), L(Z_e)})$.  The \emph{width} of $T$ is the maximum
width over all edges of~$T$. The \emph{rankwidth} of $G$, denoted by
$\rw(G)$, is the minimum integer $k$, such that there is a tree
decomposition of $G$ of width $k$.  If $\left|V(G)\right|\leq 1$, we
let $\rw(G)=0$.

Let $G$ be a graph. A partition $(Y, Z)$ of $V(G)$ is \emph{balanced}
if $$|V(G)|/3 \leq |Y|, |Z| \leq 2|V(G)|/3$$ and \emph{unbalanced}
otherwise.  An edge $e$ of a tree decomposition $T$ is (un)-balanced 
if the partition $(L(Y_e), L(Z_e))$  of $G$ as defined above is (un)-balanced.

\begin{lemma}\label{lem:balanced-edge}
  Every tree decomposition $T$ of a graph $G$ has a balanced edge.
\end{lemma}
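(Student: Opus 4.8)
The plan is to use the classical ``centroid'' argument for trees. Write $n = |V(G)|$ and recall that, since $T$ is cubic, it has at least two nodes and every internal node has degree exactly $3$. First I would dispose of the case $n = 2$: then $T$ is a single edge joining two leaves, each side of that edge contains exactly one leaf, and since $1 \in [2/3, 4/3] = [n/3, 2n/3]$ the unique edge is balanced. So assume $n \ge 3$, in which case $T$ has at least one internal node.

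For each edge $e \in E(T)$ set $y_e = |L(Y_e)|$ and $z_e = |L(Z_e)|$, so that $y_e + z_e = n$. I orient each edge toward the component carrying strictly more leaves, breaking ties arbitrarily. The key observation is that, for $n \ge 3$, the edge incident to a leaf $\ell$ separates $\{\ell\}$ (one leaf) from the remaining $n - 1 \ge 2$ leaves, so it is oriented away from $\ell$; hence no arc points into a leaf.

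Next I would locate a sink of this orientation. Since $T$ is acyclic, any directed walk visits pairwise distinct nodes and therefore terminates, and its last node $v$ has no outgoing arc. By the previous observation $v$ is internal, hence of degree $3$. Let the three branches hanging at $v$ (the leaf sets of the three components of $T - v$) have sizes $n_1 \ge n_2 \ge n_3$, with $n_1 + n_2 + n_3 = n$. Because each edge at $v$ is oriented toward $v$, the side containing $v$ is at least as heavy as the opposite branch; that is, $n - n_i \ge n_i$, so $n_i \le n/2$ for every $i$.

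It remains to check that the edge $e$ separating the largest branch is balanced, which is a short calculation. From $n_1 + n_2 + n_3 = n$ we get $n_1 \ge n/3$, and we already have $n_1 \le n/2$; thus $n_1 \in [n/3, 2n/3]$. The opposite side has $n - n_1 = n_2 + n_3$ leaves, and $n_1 \le n/2$ together with $n_1 \ge n/3$ gives $n/3 \le n - n_1 \le 2n/3$. Hence both sides of $e$ lie in $[n/3, 2n/3]$ and $e$ is balanced. I expect the only real subtlety to be the termination-at-an-internal-node step: one must be sure the orientation admits a sink and that this sink cannot be a leaf, which is exactly what the ``arcs point away from leaves'' remark guarantees.
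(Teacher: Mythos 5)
Your proof is correct and follows essentially the same route as the paper's: orient each edge of $T$ toward the side holding more leaves, observe that the resulting orientation has a sink which cannot be a leaf, and exploit the fact that the sink has degree $3$. The only cosmetic differences are that the paper orients an edge only when one side holds fewer than a third of the leaves and then derives a contradiction at the sink, whereas you orient every edge by majority (treating $|V(G)|=2$ separately) and exhibit a balanced edge at the sink directly; both are the standard centroid argument.
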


\begin{proof} 
  For every edge $e \in E(T)$, removing $e$ from $T$ yields two
  components $Y_e$ and $Z_e$. We orient $e$ from $Y_e$ to
  $Z_e$ if $2\left|L(Y_e)\right|<\left|L(Z_e)\right|$.  If there is a
  non-oriented edge $e$, then $e$ is balanced. So, assume that all
  edges are oriented. Since $T$ is a tree, some node $s \in V(T)$ must
  be a sink. Note that $s$ cannot be a leaf.  But $T$ is cubic, so
  each of the three subtrees obtained from $T$ by deleting $s$
  contains less than $\frac13$ of the vertices of $L(T)$, a
  contradiction.
\end{proof}

Interestingly, we do not need the full definition of the rankwidth of
a graph, the following property is enough for our purpose.

\begin{lemma}
  \label{l:rwBalanced}
  Let $G$ be a graph and $r\geq 1$ an integer. If every partition
  $(Y, Z)$ of $V(G)$ with rank less than $r$ is unbalanced, then
  $\rw(G) \geq r$.
\end{lemma}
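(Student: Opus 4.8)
The plan is to argue by contradiction, using Lemma~\ref{lem:balanced-edge} as the only real ingredient; in effect the statement is a repackaging of that lemma in the language of partitions. Suppose for contradiction that $\rw(G) < r$. Since $r \geq 1$, this means $\rw(G) \leq r-1$, so by the definition of rankwidth there is a tree decomposition $T$ of $G$ whose width is at most $r-1$. By the definition of the width of a tree decomposition, this means that for every edge $e \in E(T)$ the partition $(L(Y_e), L(Z_e))$ of $V(G)$ satisfies $\rk(M_{G, L(Y_e), L(Z_e)}) \leq r-1 < r$.

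Next I would invoke Lemma~\ref{lem:balanced-edge}, which guarantees that $T$ has a balanced edge $e_0$. Recalling that $L(T) = V(G)$, the pair $(L(Y_{e_0}), L(Z_{e_0}))$ is a genuine partition of $V(G)$; by the choice of $e_0$ it is balanced, and by the previous paragraph its rank is less than $r$. This exhibits a balanced partition of $V(G)$ of rank less than $r$, directly contradicting the hypothesis that every partition of rank less than $r$ is unbalanced. The contradiction establishes $\rw(G) \geq r$.

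The one point requiring a little care is the existence of the tree decomposition $T$: the definition of rankwidth produces one only when $\left|V(G)\right| \geq 2$. For $\left|V(G)\right| \leq 1$ there is no tree decomposition, and indeed the statement must be read with $\left|V(G)\right| \geq 2$ understood, since a single-vertex graph has $\rw(G)=0$ yet admits no balanced partition; this degenerate case is set aside, which is harmless because the intended applications concern arbitrarily large graphs. Apart from this, there is no genuine obstacle: the whole content of the lemma is delegated to Lemma~\ref{lem:balanced-edge}, and the argument is a one-step translation between the ``balanced edge'' viewpoint on tree decompositions and the ``balanced partition'' viewpoint used in the hypothesis.
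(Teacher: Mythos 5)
Your proof is correct and follows essentially the same route as the paper's: assume $\rw(G) < r$, take a tree decomposition of width less than $r$, apply Lemma~\ref{lem:balanced-edge} to get a balanced edge, and derive a balanced partition of rank less than $r$, contradicting the hypothesis. Your extra remark about the degenerate case $\left|V(G)\right| \leq 1$ is a fair point of care that the paper's proof silently glosses over, but it does not change the argument.
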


\begin{proof}
  Suppose for a contradiction that $\rw(G) < r$.  Then there exists a
  tree decomposition $T$ of $G$ with width less than $r$.  Consider a balanced
  edge $e$ of $T$, and  let $Y_e$ and $Z_e$ be the two connected
  components of $T\sm e$.  Then, $(L(Y_e), L(Z_e))$ is a partition of
  $V(G)$ that is balanced and has rank less than $r$, a contradiction
  to our assumptions.
\end{proof}

We do not need many definitions from linear algebra.  In fact the
following basic lemma and the fact that the rank does not increase
when taking submatrices are enough for our purpose. A (0-1)
$n\times n$ matrix $M$ is \emph{diagonal} if $M_{i, j}= 1$ whenever
$i= j$ and $M_{i, j} = 0$ whenever $i\neq j$.  It is
\emph{antidiagonal} if $M_{i, j}= 0$ whenever $i= j$ and
$M_{i, j} = 1$ whenever $i\neq j$.  It is \emph{lower triangular} if
$M_{i, j}= 1$ whenever $i\geq j$ and $M_{i, j} = 0$ whenever $i<j$.

\begin{lemma}
  \label{l:rkTri}
  For every integer $r\geq 1$, the rank of a diagonal, antidiagonal or
  triangular $r\times r$ matrix is at least $r-1$ (in fact it is $r-1$
  when $r$ is odd and the matrix is antidiagonal, and it is $r$
  otherwise).
\end{lemma}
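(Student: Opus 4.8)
The plan is to prove Lemma~\ref{l:rkTri} by handling the three matrix types separately, since each has a clean structure over $\mathbb{F}_2$. The diagonal case is immediate: a diagonal $r\times r$ matrix with all diagonal entries equal to~$1$ is the identity matrix, which has full rank~$r$. So for this case there is nothing to prove beyond the observation.

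For the triangular case, I would argue that the matrix has rank exactly~$r$. The key observation is that a triangular $(0\text{-}1)$ matrix (lower triangular with $1$'s on and below the diagonal) has all diagonal entries equal to~$1$, so its determinant over any field, computed as the product of the diagonal entries, equals~$1\neq 0$. Hence the rows are linearly independent and the rank is~$r$. Equivalently, one can perform row reduction over $\mathbb{F}_2$: subtracting (adding, over $\mathbb{F}_2$) successive rows reduces the matrix to the identity, again witnessing full rank.

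The antidiagonal case is the only one requiring care, and I expect it to be the main obstacle, because the parity of~$r$ genuinely matters. The antidiagonal $(0\text{-}1)$ matrix is $J - I$, where $J$ is the all-ones matrix and $I$ is the identity. The plan is to compute its rank over $\mathbb{F}_2$ directly. One clean approach is to note that $(J-I)$ applied to the all-ones vector gives $(r-1)$ in each coordinate, which over $\mathbb{F}_2$ is $0$ when $r$ is odd and nonzero when $r$ is even; this already suggests a rank drop exactly when $r$ is odd. To pin down the rank precisely, I would examine the kernel over $\mathbb{F}_2$: a vector $v$ satisfies $(J-I)v = 0$ iff $v_i = \sum_{j} v_j$ for every~$i$, i.e. all coordinates of $v$ equal the common sum $s = \sum_j v_j$. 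If $s=0$ then $v=0$; if $s=1$ then $v$ is the all-ones vector, which is consistent precisely when $r\cdot 1 = 1$ over $\mathbb{F}_2$, that is when $r$ is odd. Thus the kernel is trivial (rank $r$) when $r$ is even, and one-dimensional (rank $r-1$) when $r$ is odd, matching the claimed values.

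Since all three stated rank values are at least $r-1$, the uniform lower bound in the Lemma follows immediately from these three computations, and the parenthetical refinement follows from the exact ranks found. The only genuinely delicate point is the antidiagonal parity argument; the diagonal and triangular cases are routine linear algebra over $\mathbb{F}_2$.
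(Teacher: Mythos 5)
Your proof is correct, and it establishes not only the lower bound but also the refined parenthetical claim about exact ranks. For comparison: the paper's entire proof of this lemma is the single word ``Clear.'', so there is no authorial argument to measure yours against --- your writeup simply supplies the details the authors left to the reader. The diagonal and triangular cases are indeed routine (the identity matrix, and a unit determinant from the all-ones diagonal), and your kernel computation for the antidiagonal matrix $J-I$ over $\mathbb{F}_2$ --- trivial kernel when $r$ is even, kernel spanned by the all-ones vector exactly when $r$ is odd --- is precisely the parity analysis that justifies the ``$r-1$ when $r$ is odd, $r$ otherwise'' statement. One small point worth noting: you correctly read the paper's nonstandard use of ``antidiagonal'' as meaning the complement of the identity ($0$ on the diagonal, $1$ elsewhere) rather than a matrix supported on the anti-diagonal; had you used the latter interpretation, the parity claim would have come out differently, so this was the one place where a misreading could have derailed the argument.
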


\begin{proof}
  Clear. 
\end{proof}

A (0-1) $n\times n$ matrix $M$ is \emph{near-lower triangular} if
$M_{i, j}= 1$ whenever $i> j$ and $M_{i, j} = 0$ whenever $i<j$.  The
values on the diagonal are not restricted.

\begin{lemma}
  \label{l:nearT}
  For every integer $r\geq 1$, the rank of a near-triangular
  $2r\times 2r$ matrix $M$ is at least $r$.
\end{lemma}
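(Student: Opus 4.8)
The plan is to exhibit an $r\times r$ submatrix of $M$ that is triangular in the sense of Lemma~\ref{l:rkTri}, and then to conclude by combining that lemma with the stated fact that the rank cannot increase when passing to a submatrix. The whole difficulty of the statement is that the $2r$ diagonal entries of $M$ are arbitrary, so I cannot simply read off the rank from a principal submatrix; the idea is to select rows and columns whose indices can never coincide, so that the chosen submatrix avoids the diagonal entirely and is therefore completely determined by the fixed off-diagonal pattern.

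Concretely, I would keep the $r$ rows of even index, namely $2,4,\dots,2r$, and the $r$ columns of odd index, namely $1,3,\dots,2r-1$, and let $N$ be the resulting $r\times r$ submatrix, so that $N_{a,b}=M_{2a,\,2b-1}$ for $a,b\in\{1,\dots,r\}$. Since $2a$ is even and $2b-1$ is odd, one always has $2a\neq 2b-1$, so no diagonal entry of $M$ is ever selected. It then remains to check the two off-diagonal cases: when $a\geq b$ we have $2a>2b-1$, hence $N_{a,b}=1$, and when $a<b$ we have $2a<2b-1$, hence $N_{a,b}=0$. Thus $N$ is a triangular $r\times r$ matrix. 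By Lemma~\ref{l:rkTri}, $N$ then has rank $r$ (triangular matrices fall in the ``otherwise'' case of that lemma), and since $N$ is a submatrix of $M$ we conclude $\rk(M)\geq \rk(N)=r$, as desired.

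I do not expect a genuine obstacle here: once one commits to dodging the unrestricted diagonal, the parity choice of indices is essentially forced, and the verification reduces to the short case analysis above. The only point worth double-checking is the bookkeeping that even rows against odd columns yield exactly $r$ rows and $r$ columns and faithfully reproduce the triangular pattern, which the displayed inequalities confirm.
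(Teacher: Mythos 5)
Your proof is correct and is essentially identical to the paper's: the paper also takes the submatrix $N_{i,j}=M_{2i,2j-1}$ of even-indexed rows against odd-indexed columns, observes it is triangular, and concludes via Lemma~\ref{l:rkTri} and the submatrix rank inequality. You merely spell out the parity check and the case analysis that the paper leaves implicit.
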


\begin{proof}
  The submatrix $N$ of $M$ formed by the rows of even indexes and the
  columns of odd indexes is a triangular $r\times r$ matrix (formally
  for every $i, j \in \{1, \dots r\}$, $N_{i, j} = M_{2i, 2j-1}$).  By
  Lemma~\ref{l:rkTri}, $\rk(M) \geq \rk(N) = r$.
\end{proof} 

Note that Lemma~\ref{l:nearT} is best possible as shown by the matrix
represented in Figure~\ref{fig:ntm}.

\begin{figure}
$$  \begin{pmatrix}
    1 & 0 & 0 & 0 & 0 & 0\\
    1 & 0 & 0 & 0 & 0 & 0\\
    1 & 1 & 1 & 0 & 0 & 0\\
    1 & 1 & 1 & 0 & 0 & 0\\
    1 & 1 & 1 & 1 & 1 & 0\\
    1 & 1 & 1 & 1 & 1 & 0\\
    \end{pmatrix}
  $$
  \caption{A near-lower triangular $6\times 6$ matrix of rank~3\label{fig:ntm}}
\end{figure}

\section{Matchings, antimatchings and crossings}
\label{s:mac}

In this section, we describe the particular adjacencies that we need to build our
graphs. Suppose that a graph $G$ contains two disjoint ordered sets of
vertices of the same cardinality $k$, say $X=\{u^1, \dots, u^k\}$ and
$X'=\{v^1, \dots, v^k\}$.

%\begin{table}
\begin{itemize}
\item We say that $(G, X, X')$ is a \emph{regular matching} when:

  for all $j, j'\in \{1, \dots, k\}$, $u^jv^{j'}\in E(G)$ if and only
  if $j=j'$.

\item
  We say that $(G, X, X')$ is a \emph{regular antimatching} when:

  for all $j, j'\in \{1, \dots, k\}$, $u^jv^{j'}\in E(G)$ if and only
  if $j\neq j'$.

\item
  We say that $(G, X, X')$ is a \emph{regular crossing} when:

  for all $j, j'\in \{1, \dots, k\}$, $u^jv^{j'}\in E(G)$ if and only
  if $j+j'\geq k+1$.

\item
  We say that $(G, X, X')$ is a \emph{expanding matching} when:

  for all $j, j'\in \{1, \dots, k\}$, $u^jv^{j'}\in E(G)$ if and only
  if $j' = 2j$ or $j'=2j+1$.

\item
  We say that $(G, X, X')$ is a \emph{expanding antimatching} when:
  
  for all $j, j'\in \{1, \dots, k\}$, $u^jv^{j'}\in E(G)$ if and only
  if $j' \neq 2j$ and $j'\neq 2j+1$.

\item We say that $(G, X, X')$ is a \emph{expanding crossing} when:
  
  for all $j, j'\in \{1, \dots, k\}$, $u^jv^{j'}\in E(G)$ if and only
  if $2j+j'\geq 2k+2$.

%\end{itemize}
%\caption{Matchings, antimatching, crossing and their expanding versions}
% \end{table}

% \begin{table}
% \begin{itemize}

\item
  We say that $(G, X, X')$ is a \emph{skew expanding matching} when $k$ equals 2
  modulo 4 and:

  for all $j\in \{1, \dots, (k-2)/4\}$ and $j'\in \{1, \dots, k\}$,
  $u^jv^{j'}\in E(G)$ if and only if $j' = 2j$ or $j'=2j+1$; 

  for all $j\in \{(k-2)/4+1, \dots, (3k+2)/4\}$ and
  $j'\in \{1, \dots, k\}$, $u^jv^{j'}\notin E(G)$; and

  for all $j\in \{(3k+2)/4+1, \dots, k\}$ and $j'\in \{1, \dots, k\}$,
  $u^jv^{j'}\in E(G)$ if and only if $j' = 2j-k-2$ or $j'=2j-k-1$.

\item We say that $(G, X, X')$ is a \emph{skew expanding antimatching} when $k$
  equals 2 modulo 4 and:

  for all $j\in \{1, \dots, (k-2)/4\}$ and $j'\in \{1, \dots, k\}$, 
  $u^jv^{j'}\notin E(G)$ if and only if $j' = 2j$ or $j'=2j+1$; 

  for all $j\in \{(k-2)/4+1, \dots, (3k+2)/4\}$ and $j'\in \{1, \dots, k\}$, 
  $u^jv^{j'}\in E(G)$; and

  for all $j\in \{(3k+2)/4+1, \dots, k\}$ and $j'\in \{1, \dots, k\}$,
  $u^jv^{j'}\notin E(G)$ if and only if $j' = 2j-k-2$ or $j'=2j-k-1$.

\item We say that $(G, X, X')$ is a \emph{skew expanding crossing} when $k$
  equals 2 modulo 4 and:

  for all $j\in \{1, \dots, (k-2)/4\}$ and $j'\in \{1, \dots, k\}$, 
  $u^jv^{j'}\in E(G)$ if and only if $2j+j'\geq k$; 

  for all $j\in \{(k-2)/4+1, \dots, k/2\}$ and $j'\in \{1, \dots, k\}$, 
  $u^jv^{j'}\in E(G)$ if and only if $j'\geq k/2+1$;  

  for all $j\in \{k/2+1, \dots, (3k+2)/4\}$ and $j'\in \{1, \dots, k\}$, 
    $u^jv^{j'}\in E(G)$ if and only if $j'\geq k/2-1$;   and 
  
  for all $j\in \{(3k+2)/4+1, \dots, k\}$ and $j'\in \{1, \dots, k\}$,
  $u^jv^{j'}\in E(G)$ if and only if $2j+j'-2\geq 2k$.
\end{itemize}
%\caption{Skew Expanding matchings, antimatchings and crossings}
%\end{table}

A triple $(G, X, X')$ is \emph{regular} if it is a regular matching, a regular
antimatching or a regular crossing, see Figure~\ref{fig:regular}. On
the figures, vertices are partitioned into boxes and several sets receive
names. These will be explained in the next section. 

\begin{figure}
  \begin{center}
    \includegraphics[width=4cm]{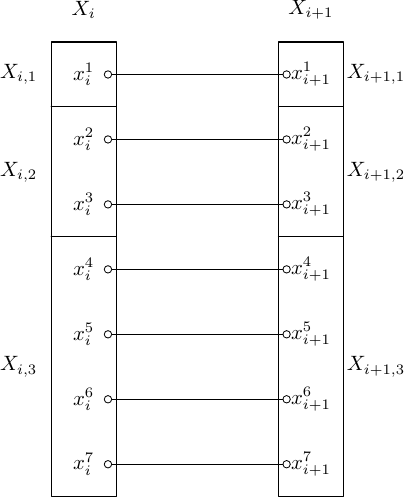}\rule{.5cm}{0cm}
    \includegraphics[width=4cm]{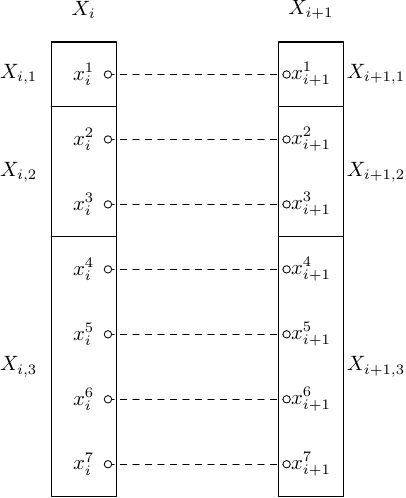}\rule{.5cm}{0cm}
    \includegraphics[width=4cm]{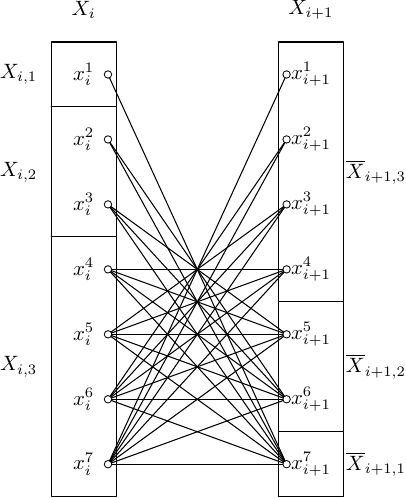}
  \caption{A regular matching, a regular antimatching and a regular
    crossing (on the regular antimatching, only non-edges are
    represented)\label{fig:regular}}
  \end{center}
\end{figure}

A triple $(G, X, X')$ is \emph{expanding} if it is an expanding
matching, an expanding antimatching or an expanding crossing, see
Figure~\ref{fig:expanding}.

\begin{figure}
  \begin{center}
    \includegraphics[width=4cm]{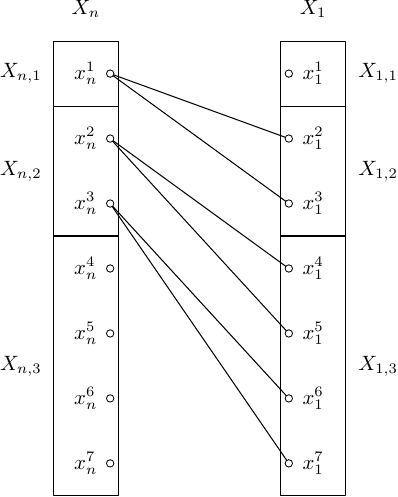}\rule{.5cm}{0cm}
    \includegraphics[width=4cm]{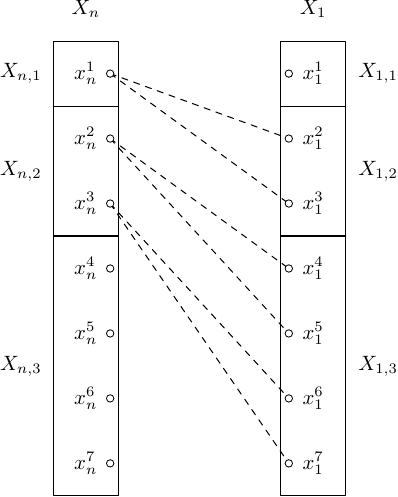}\rule{.5cm}{0cm}
    \includegraphics[width=4cm]{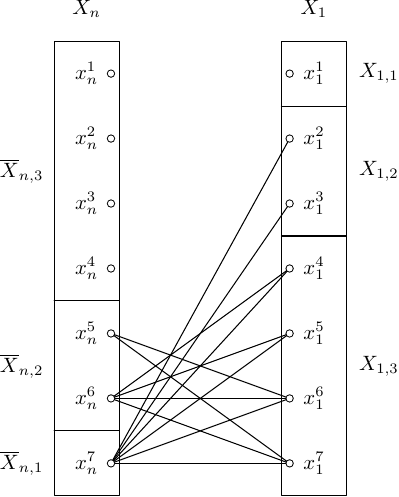}
  \caption{An expanding matching, an expanding antimatching and an expanding
    crossing (on the expanding antimatching, only non-edges are
    represented)\label{fig:expanding}}
  \end{center}
\end{figure}

A triple $(G, X, X')$ is \emph{skew expanding} if it is a skew
expanding matching, a skew expanding antimatching or a skew expanding
crossing, see Figure~\ref{fig:skew}.

\begin{figure}
  \begin{center}
    \includegraphics[width=4cm]{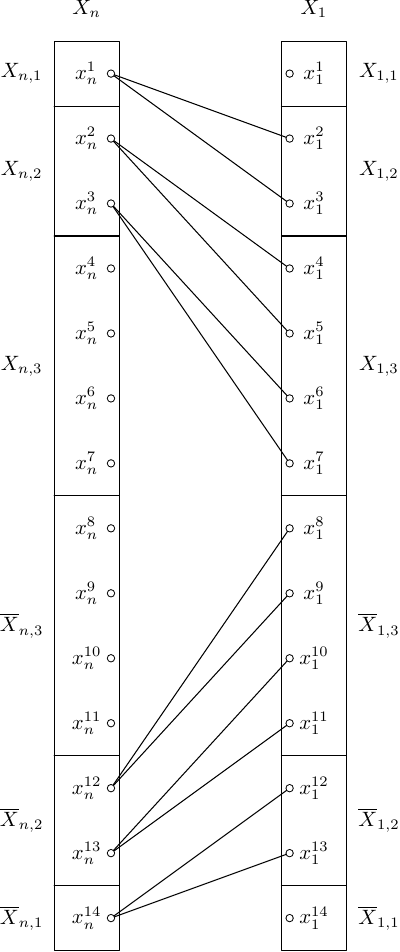}\rule{.5cm}{0cm}
    \includegraphics[width=4cm]{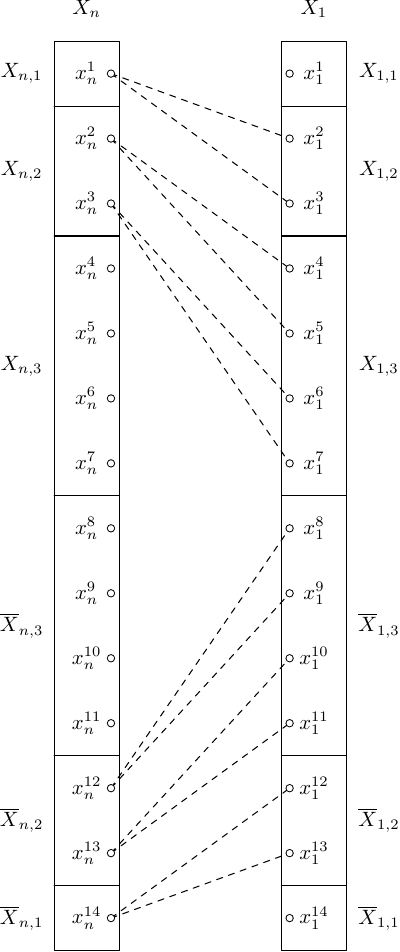}\rule{.5cm}{0cm}
    \includegraphics[width=4cm]{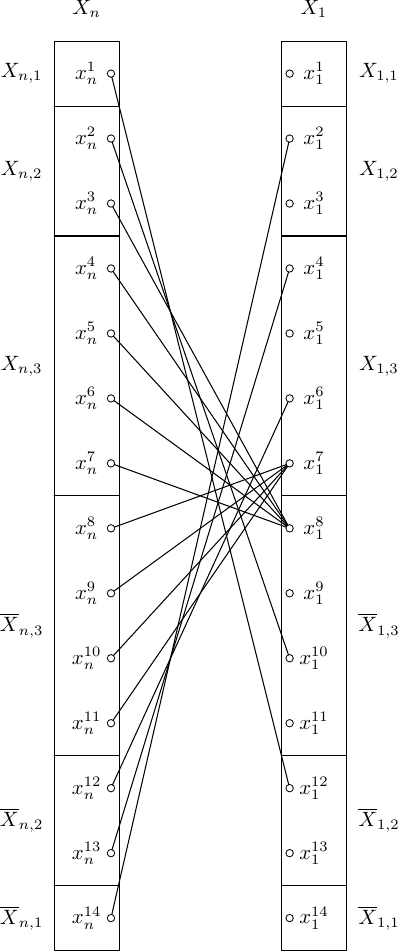}
  \caption{A skew expanding matching, a skew expanding antimatching
    and a skew expanding
    crossing (on the crossing, for each $i\in \{1, \dots, k\}$, only
    the edge  $x_n^ix_1^j$ such that $j$ is minimum is represented and
    on the skew antimatching, only non-edges are
    represented)\label{fig:skew}}
  \end{center}
\end{figure}

A triple $(G, X, X')$ is a \emph{parallel triple} if it is a matching
or an antimatching (regular, expanding or skew expanding).  A triple
$(G, X, X')$ is a \emph{cross triple} if it is a crossing (regular,
expanding or skew expanding).

\section{Carousels}
\label{s:carousel}

The graphs that we construct are called \emph{carousels} and are built
from $n\geq 3$ sets of vertices of equal cardinality $k\geq 1$: $X_1$,
\dots, $X_n$.  So, let $G$ be a graph such that
$V(G) = X_1 \cup \dots \cup X_n$.  Throughout the rest of the paper,
the subscripts for sets $X_i$ are considered modulo $n$.  The graph
$G$ is a \emph{carousel on $n$ sets of cardinality $k$} if:

\begin{enumerate}
\item $(G, X_1, X_2)$ is a regular crossing;
\item for every $i\in \{2, \dots, n-1\}$, $(G, X_i, X_{i+1})$ is a
  regular triple and
\item $(G, X_n, X_1)$ is an expanding triple or a skew expanding triple.
\end{enumerate}

Whether $(G, X_n, X_1)$ should be an expanding triple or a skew
expanding triple depends on how many crossing triples there are among
the triples $(G, X_i, X_{i+1})$ for $i\in \{1, \dots, n\}$.   Let us
explain this. 

Let $s\geq 1$ be an integer. An \emph{even carousel of order $s$ on
  $n$ sets} is a carousel on $n$ sets of cardinality $k$ such that:

\begin{enumerate}
\item $k = 2^s-1$; 
\item the total number of crossing triples among the triples $(G, X_i,
  X_{i+1})$ for $i\in \{1, \dots, n\}$ is even and
\item $(G, X_n, X_1)$ is an expanding triple.
\end{enumerate}

Let $s\geq 1$ be an integer.  An \emph{odd carousel of order $s$ on
  $n$ sets} is a carousel on $n$ sets of cardinality $k$ such that:

\begin{enumerate}
\item $k=2\times (2^s-1)$ (so $k$ is equal to 2 modulo 4); 
\item the total number of crossing triples among the triples $(G, X_i,
  X_{i+1})$ for $i\in \{1, \dots, n\}$ is odd and
\item $(G, X_n, X_1)$ is a skew expanding triple.
\end{enumerate}

In carousels, the edges inside the sets $X_i$ or between sets $X_i$
and $X_j$ such that $j\notin \{i-1, i, i+1\}$ are not specified, they
can be anything.  Our main result is the following.

\begin{theorem}
  \label{th:main}
  For every pair of integers $n\geq 3$ and $r\geq 1$, there exists an
  integer $s$ such that every even carousel and every odd carousel of
  order $s$ on $n$ sets has rankwidth at least~$r$.
\end{theorem}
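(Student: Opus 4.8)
The plan is to apply Lemma~\ref{l:rwBalanced}: it suffices to show that, for $s$ chosen large enough in terms of $n$ and $r$, every \emph{balanced} partition $(Y,Z)$ of the vertex set of an even or odd carousel $G$ of order $s$ has $\rk_G(Y,Z)\ge r$. So suppose for contradiction that $(Y,Z)$ is balanced with $\rk_G(Y,Z)<r$. For each $i$ I record the trace of $Y$ on $X_i=\{x_i^1,\dots,x_i^k\}$ as a vector $c_i\in\{0,1\}^k$, with $c_i(j)=1$ iff $x_i^j\in Y$. Since only consecutive sets carry prescribed adjacencies, for each $i$ the submatrix of $M_{G,Y,Z}$ with rows $Y\cap X_i$ and columns $Z\cap X_{i+1}$ (or its transpose, with rows $Y\cap X_{i+1}$ and columns $Z\cap X_i$) is exactly a block of the triple $(G,X_i,X_{i+1})$; being a submatrix, its rank is at most $\rk_G(Y,Z)<r$. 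Thus every consecutive block has rank $<r$, and the whole strategy is to show that this is incompatible with balancedness.

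First I would read off what ``rank $<r$ on a block'' forces for each triple type. For the expanding matching the rows $x_n^j$ have pairwise disjoint supports $\{2j,2j+1\}$, so the block between $X_n$ and $X_1$ is a direct sum of rank-one pieces and its rank equals the number of internal nodes $j$ at which $c_n(j)$ disagrees with a child value in $c_1$, where $\{1,\dots,k\}$ is read as a complete binary tree (heap) on $k=2^s-1$ nodes. Small rank therefore forces the \emph{heap-consistency} relation $c_1(2j)=c_1(2j+1)=c_n(j)$ for all but at most $r$ indices $j$. For a regular matching or antimatching, Lemma~\ref{l:rkTri} applied to the diagonal/antidiagonal blocks shows small rank forces $c_i=c_{i+1}$ on all but at most $r$ coordinates; for a regular crossing the block is a threshold (anti-triangular) matrix, and small rank forces the reversal of $c_i$ onto $c_{i+1}$ off $O(r)$ coordinates, together with a bound of $O(r)$ on the number of alternations of each $c_i$ in array order. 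These are the propagation lemmas of Section~\ref{s:propagate}.

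The heart of the argument is a tension between two orderings of $\{1,\dots,k\}$: the array order $1<2<\dots<k$, controlled by the matching and crossing blocks, and the heap order, controlled by the expanding block. Propagating the small-rank constraints around the cycle, the matching and crossing edges force every $c_i$ to agree (up to reversal, according to the number of crossings already passed) with $c_1$ off $O(nr)$ coordinates, while the expanding edge forces $c_1(v)=c_1(\mathrm{parent}(v))$ off $O(nr)$ tree-edges. If $c_1$ has few alternations in array order, then its minority colour class is a union of $O(r)$ blocks that are contiguous in breadth-first (array) order; balancedness makes one such block have size $\ge k/(3nr)$, and a breadth-first-contiguous block of that size straddles a whole level of the tree, so it meets $\Omega(k/(nr))$ parent--child edges of opposite colour, forcing the expanding block to have rank $\ge r$. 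Conversely, if heap-consistency fails on only $O(nr)$ tree-edges, then $c_1$ is nearly constant, so balancedness fails. Taking $2^s$ much larger than any fixed polynomial in $n$ and $r$ makes both escapes impossible, so some block would have rank $\ge r$, contradicting $\rk_G(Y,Z)<r$. This is the even case (Section~\ref{s:evenC}), where an even number of crossings makes the net array-reversal around the cycle trivial, so the plain expanding triple closes the cycle compatibly with the heap order.

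The remaining step, and the one I expect to be the main obstacle, is the odd case (Section~\ref{s:oddC}). When the total number of crossings is odd, traversing the cycle reverses the array order, so the plain doubling $j\mapsto\{2j,2j+1\}$ no longer matches the returning colouring; this is exactly why the construction replaces it by the \emph{skew} expanding triple, with its piecewise index rule ($j'\in\{2j,2j+1\}$ on one part and $j'\in\{2j-k-2,2j-k-1\}$ on another), the doubled cardinality $k=2(2^s-1)$, and the congruence $k\equiv 2\pmod 4$. I would need to verify that the skew block is still a direct sum of rank-one pieces, so that small rank again forces a \emph{twisted} heap-consistency that absorbs the reversal, and then rerun the same breadth-first-versus-heap tension. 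Making the index bookkeeping of the skew maps exact --- checking that the pieces tile $\{1,\dots,k\}$ and that the order-reversal is absorbed so that the cyclic propagation stays consistent --- is the delicate part, and the congruence and the factor $2$ in $k$ are precisely what make it go through; this is why the proof must be split into the even and the odd sections.
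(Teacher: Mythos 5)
Your setup is sound (reducing to ``every balanced partition has rank $\geq r$'' via Lemma~\ref{l:rwBalanced}, extracting pointwise constraints from the rank of each consecutive block, and identifying the array-order-versus-heap-order tension), but the step where you actually derive the contradiction in the even case contains two false claims. First, a breadth-first-contiguous block of size $k/(3nr)$ need \emph{not} straddle a whole level of the tree: the last level alone has $(k+1)/2$ vertices, so such a block can sit strictly inside it. More importantly, even granting a large minority block $I$, it does not follow that \emph{$I$ itself} meets many bichromatic parent--child edges: the parents of $I$ may also be minority-coloured, forming another of the $O(r)$ blocks. For instance, if the minority class is $I\cup P(I)$ with $I$ an interval of the last level and $P(I)$ its parent interval, then every parent edge of $I$ is monochromatic and all heap-violations occur at $P(I)$'s parent edges instead; iterating this ``cone'' construction through up to $O(r)$ ancestor levels shows that the number of forced bichromatic edges can be as small as $|I|/2^{O(r)}$, not $\Omega(k/(nr))$, so the argument needs an additional iteration over the block structure (and an exponential-in-$r$ choice of $k$) that you do not supply. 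Second, the ``conversely'' branch is simply wrong: colouring the subtree rooted at one child of the root with one colour and everything else with the other violates heap-consistency on exactly one tree-edge, yet both colour classes have size about $k/2$, so near-heap-consistency alone does not imply near-constancy or unbalancedness. That colouring is instead excluded by the alternation bound coming from the crossing $(G,X_1,X_2)$ --- the two constraints must be played against each other \emph{simultaneously}, not as alternatives in a dichotomy, and that joint argument is exactly what is missing.

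Beyond this, you explicitly defer the odd case (``I would need to verify \dots the delicate part''), and even in the even case you only analyse the expanding \emph{matching}, not the expanding antimatching or expanding crossing that the definition of an even carousel also permits; so the proposal does not constitute a proof of the stated theorem. For comparison, the paper sidesteps the delicate pointwise geometry entirely: it partitions each $X_i$ into dyadic pieces $X_{i,j}$ of size $2^{j-1}$, assigns each piece a coarse cardinality label $Y_m$ with $m=\lceil|S\cap Y|/r\rceil$, and proves (Lemmas~\ref{l:regProp}--\ref{l:skewProp}) that a label can change by at most $1$ across a regular triple and at most $2$ across the (skew) expanding triple, the expanding triple also shifting the scale from $j$ to $j+1$. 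Starting from a piece $X_{1,t}$ entirely inside $Z$ (which exists by the block-count Lemma~\ref{l:reg4blocks}) and going around the cycle $O(r)$ times, the two largest dyadic pieces of every $X_i$ --- more than $\frac34$ of all vertices --- end up with small labels, forcing $|Z|>\frac23|V(G)|$. A workable repair of your approach in the same spirit: near-heap-consistency off $m=O(nr)$ edges forces the density of $Y$ in consecutive levels to differ by $O(v_\ell/2^{\ell})$, so below depth $\approx\log(nr)$ all levels have density bounded away from $0$ and $1$ by balancedness, hence each contributes an alternation, giving $\geq s-O(\log nr)$ alternations and contradicting the $O(r)$ alternation bound once $s\gg r+\log n$; but as written your endgame does not establish this.
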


\section{Outline of the proof}\label{s:sketch}

In this section, we provide an outline of the proof of the main
theorem. The detail will be given in later sections.

For all $i\in \{1,...,n\}$, we let
$X_i = \{x^1_i,...,x^{k}_i\}$.  There is a symmetry in every set $X_i$
and we need some notation for it.  Let $f$ be the function defined for
each integer $j$ by $f(j) = k-j +1$.  We will use a horizontal bar to
denote it as follows.  When $x_i^j$ is a vertex in some set $X_i$, we
denote by $\overline{x}_i^j$ the vertex $x_i^{f(j)}$ and by
$\overline{x}_{i+1}^j$ the vertex $x_{i+1}^{f(j)}$.  We use a similar
notation for sets of vertices: if $S_i\subseteq X_i$ then
$$\overline{S}_i = \{\overline{x}_i^j | x_i^j\in S_i\}$$ and 
$$\overline{S}_{i+1} = \{\overline{x}_{i+1}^j | x_i^j\in S_i\}.$$ Note
that $\overline{\overline{a}} = a$ for any object $a$ such that
$\overline{a}$ is defined.

Each set $X_i$ is partitioned into parts and this differs for the
even and the odd case.

When $G$ is an even carousel, we designate by induction each set $X_i$ as a
\emph{top set} or a \emph{bottom set}. The set $X_1$ is by definition a
top set, and the status of the next ones change at every cross
triple. More formally, for every $i\in \{1, \dots, n-1\}$:

\begin{itemize}
\item If $X_i$ is top set and $(G, X_i, X_{i+1})$ is a parallel triple, then
  $X_{i+1}$ is a top set.
\item If $X_i$ is top set and $(G, X_i, X_{i+1})$ is a cross triple, then
  $X_{i+1}$ is a bottom set.
\item If $X_i$ is bottom set and $(G, X_i, X_{i+1})$ is a parallel triple, then
  $X_{i+1}$ is a bottom set.
\item If $X_i$ is bottom set and $(G, X_i, X_{i+1})$ is a cross triple, then
  $X_{i+1}$ is a top set.
\end{itemize}

Then, for all $i\in \{1, \dots, n\}$ and $j\in \{1, \dots, s\}$, we
set $X_{i, j} = \{x_i^{2^{j-1}}, \dots, x_i^{2^j-1}\}$.  Observe that
for every fixed $i$, the $X_{i, j}$'s (resp.\ the
$\overline{X}_{i, j}$'s) form a partition of $X_i$.  We view every top
set $X_i$ as partitioned by the $X_{i, j}$'s and every bottom set $X_i$
as partitioned by the $\overline{X}_{i, j}$'s.  See Figure~\ref{fig:regular}.

In an odd carousel of order $s$, there are also top sets and bottom sets
(defined as above), but they are all partitioned in the same way: for
all $i\in \{1, \dots, n\}$ and $j\in \{1, \dots, s\}$, we set
$X_{i, j} = \{x_i^{2^{j-1}}, \dots, x_i^{2^j-1}\}$, and the partition
is $(X_{i, 1}, \dots, X_{i, 1}, \overline X_{i, 1}, \dots, \overline
X_{i, s})$. See Figure~\ref{fig:skew}.

To prove Theorem~\ref{th:main}, we fix the integers $n$ and $r$.  We
then compute a large integer $s$ (depending on $n$ and $r$) and we
consider a carousel $G$ of order $s$ on $n$ sets.  We then study the behavior of
a partition $(Y, Z)$ of $V(G)$ of rank less than $r$, our goal
being to prove that it is unbalanced (this proves the rankwidth of $G$
is at least $r$ by Lemma~\ref{l:rwBalanced}).

To check whether a partition is balanced, we need some notation to
measure how many elements of $Y$ some set contains.  For an integer
$m\geq 0$, a set $S\subseteq V(G)$ receives \emph{label $Y_m$} if
$$(m-1)r < |S \cap Y| \leq mr.$$ Note that $S$ has label $Y_0$
if and only if $S\subseteq Z$.  Having label $Y_m$ means that
$\lceil |S\cap Y| / r \rceil = m$.  Observe that for every subset $S$
of $V(G)$, there exists a unique integer $m\geq 0$ such that $S$ has
label $Y_m$.

The first step of the proof is to note that when enumerating the vertices of $X_1$
from $x_1^1$ to $x_1^k$, there are not too many switches
from $Y$ to $Z$ or from $Z$ to $Y$.  Because a large number of changes
would imply that the rank between $X_1$ and $X_2$ is high, this is
formally stated and proved in Lemma~\ref{l:reg4blocks}.  For this to be
true, we need that $(G, X_1, X_2)$ is a crossing, this is why there is
no flexibility in the definition for the adjacency between $X_1$ and
$X_2$ in the definition of carousels.

Since $k$ (the common cardinality of the sets $X_i$) is large enough
by our choice of $s$, we then know that in $X_1$ there must be large
intervals of consecutive vertices in $Y$ or in $Z$, say in $Z$ up to
symmetry.  So, one of the sets that partition $X_1$ is fully in $Z$,
say $X_{1, t}$ for some large integer $t$, and has therefore
label~$Y_0$.  The rest of the proof shows that this label propagates
to the rest of the graph.  By a propagation lemma (stated in the next
section), we prove that $\overline{X}_{2, t}$ (since $(G, X_1, X_2)$
is a cross triple) has a label very close to $Y_0$, namely $Y_0$ or
$Y_1$.  This is because a larger label, say, $Y_m$ with $m>1$, would
give rise to a matrix of rank at least $r$ between $Y$ and $Z$.  And
we continue to apply the propagation mechanism until we reach
$X_{n, 1}$ (or $\overline{X}_{n, 1}$).  The label may be $Y_0$, $Y_1$,
\dots, $Y_{n-1}$, but not larger.

Now, we apply other propagation lemmas to handle the triple
$(G, X_n, X_1)$ (that is expanding or skew expanding by definition of
carousels).  Here the adjacency is designed so that some part that is
twice as large than $X_{i,t}$, namely $X_{1, t+1}$ if $G$ is an even carousel, or
$\overline{X}_{1, t+1}$ if $G$ is an odd carousel, has a label being not too large.

For even carousels, by repeating the procedure above, we prove that
for each $i\in \{1, \dots, n\}$, $X_{i, s-1}$ and $X_{i, s}$ have a
label $Y_m$ with $m$ not too large. And since the size of the sets
$X_{i, j}$ is exponential in $j$, the two sets $X_{i, s-1}$ and
$X_{i, s}$ represent a proportion more than $\frac{3}{4}$ of all the
set $X_i$.  And since $X_{i, s-1}$ and $X_{i, s}$ have label $Y_m$
with $m$ small, they contain mostly vertices from $Z$, so that the
partition $(Y, Z)$ is unbalanced.

For odd carousels, the proof is similar, except that we prove that
$X_{i, s-1}$, $X_{i, s}$, $\overline{X}_{i, s-1}$ and
$\overline{X}_{i, s}$ have label $Y_m$ with $m$ being not too large. These $4n$ sets represent a
proportion more than $\frac{3}{4}$ of all the set $X_i$.

In the next section, we supply the detail of this proof.

\section{Blocks and propagation}
\label{s:propagate}

Throughout the rest of this section, $n\geq 3$, $s\geq 1$ and $G$ is a
an even or an odd carousel of order $s$ on $n$ sets. Also, we assume $r\geq 2$ is
an integer and $(Y, Z)$ is a partition of $V(G)$ of rank less than
$r$.

Suppose $X = \{x^1, \dots, x^k\}$ is an ordered set and $(Y, Z)$ is a
partition of $X$.  We call an \emph{interval} of $X$ any subset of $X$ of
the form $\{x^i, x^{i+1}, \dots, x^j\}$.  A \emph{block of $X$}
w.r.t.\ $(Y, Z)$ is any non-empty interval of $X$ that is fully
contained in $Y$ or in $Z$ and that is maximal w.r.t.\ this property.
Clearly, $X$ is partitioned into its blocks w.r.t.\ $(Y, Z)$.

\begin{lemma}
  \label{l:reg4blocks}
  $X_1$ has fewer than $8r$ blocks w.r.t.\ $(Y, Z)$.
\end{lemma}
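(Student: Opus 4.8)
The plan is to show that if $X_1$ had many blocks, then the crossing adjacency $(G,X_1,X_2)$ would force a large near-triangular submatrix inside $M_{G,Y,Z}$, contradicting $\rk(M_{G,Y,Z})<r$. Recall that $(G,X_1,X_2)$ being a regular crossing means $x_1^j x_2^{j'}\in E(G)$ if and only if $j+j'\geq k+1$. The crucial observation I would record first is a \emph{separator} property: for a position $p\in\{1,\dots,k-1\}$, the vertex $x_2^{k-p}$ satisfies $x_1^{p+1}x_2^{k-p}\in E(G)$ (since $(p+1)+(k-p)=k+1$) but $x_1^{p}x_2^{k-p}\notin E(G)$ (since $p+(k-p)=k<k+1$). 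So $x_2^{k-p}$ sees the vertex just after position $p$ in $X_1$ but not the one just before it.

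Now suppose for contradiction that $X_1$ has at least $8r$ blocks w.r.t.\ $(Y,Z)$, and let the block boundaries sit between positions $p_t$ and $p_t+1$ for $p_1<p_2<\dots<p_{b-1}$, where $b\geq 8r$. The heart of the argument is the following near-triangularity computation. For each boundary $t$ pick any $q_t\in\{p_t,p_t+1\}$; then the entry of the crossing between row $x_1^{q_t}$ and column $x_2^{k-p_{t'}}$ equals $[\,q_t+(k-p_{t'})\geq k+1\,]=[\,q_t>p_{t'}\,]$. Since $q_t\geq p_t>p_{t'}$ when $t>t'$ and $q_t\leq p_t+1\leq p_{t'}$ when $t<t'$, ordering rows and columns by $t$ gives a matrix that is $1$ strictly below the diagonal and $0$ strictly above it, with the diagonal unconstrained; that is, it is near-triangular in the sense of Lemma~\ref{l:nearT}, \emph{regardless of the choice of each $q_t$}. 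This flexibility in $q_t$ is exactly what lets me respect the partition, because at a boundary the two vertices $x_1^{p_t}$ and $x_1^{p_t+1}$ lie in opposite parts, so one of them is in $Y$ and one in $Z$, and I may select whichever side I need.

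The extraction then goes by a pigeonhole on the $b-1$ separators $x_2^{k-p_t}$, each of which lies in $Y$ or in $Z$. If at least half of them lie in $Z$, I use those boundaries, choosing each $q_t$ so that $x_1^{q_t}\in Y$; the resulting near-triangular matrix has all rows in $Y$ and all columns in $Z$, hence is a submatrix of $M_{G,Y,Z}$. Symmetrically, if at least half the separators lie in $Y$, I take them as rows and choose $q_t$ with $x_1^{q_t}\in Z$ as columns, obtaining (after reversing the index order) a near-triangular submatrix of $M_{G,Y,Z}$. In either case the submatrix has size at least $(b-1)/2\geq 2r$, so by Lemma~\ref{l:nearT} its rank is at least $r$, contradicting $\rk(M_{G,Y,Z})<r$. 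This forces $b<8r$.

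The main obstacle is not the algebra, which is clean, but the fact that the partition of $X_2$ is completely uncontrolled: we cannot dictate on which side of $(Y,Z)$ each separator falls, which is why the two symmetric cases and the factor-of-two pigeonhole are needed, and why the crossing (rather than a matching) is essential here. A secondary nuisance is ensuring that the chosen rows $x_1^{q_t}$ are genuinely distinct vertices; collisions can only arise at singleton blocks where two boundaries are adjacent, and I expect to absorb this bookkeeping into the constant, which is presumably the reason the bound is stated as the generous $8r$ rather than the sharper value the argument suggests.
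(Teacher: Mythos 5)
Your proof is correct and follows essentially the same route as the paper's: both exploit the threshold structure of the regular crossing to extract, from the alternation of blocks in $X_1$, a near-triangular submatrix of $M_{G,Y,Z}$ (via a pigeonhole on which side of the partition the relevant $X_2$-vertices fall), and then invoke Lemma~\ref{l:nearT}. The only cosmetic difference is that the paper chooses one representative per block and pairs it with its mirror image $x_2^{k-j+1}$, so all chosen vertices are automatically distinct and the singleton-block collision issue you flag never arises; your own accounting does go through regardless, since at least $4r$ separators land on one side and collisions (which form a matching among adjacent boundaries) at worst halve this to the $2r$ needed.
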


\begin{proof}
  Suppose that $X_1$ has at least $8r$ blocks, and let $B_1$, \dots,
  $B_{8r}$ be the $8r$ first ones.  Up to symmetry, we may assume that
  for every $i\in \{1, \dots, 4r\}$, $B_{2i-1}\subseteq Y$ and
  $B_{2i} \subseteq Z$.  For each $i\in \{1, \dots, 4r\}$, we choose
  some element $y_1^{i}\in B_{2i-1}$ and some element
  $z_1^{i}\in B_{2i}$.  We denote by $\overline{y}_2^i$ and
  $\overline{z}_2^i$ the corresponding vertices in $X_2$ of $y_1^i$ and $z_1^i$
  respectively. Formally, $y_1^i = x_1^j$ for some integer
  $j\in \{1, \dots, k\}$,
  $\overline{y}_2^i = \overline{x}_2^j = x_2^{k-j+1}$ and the
  definiton of $\overline{z}_2^i$ is similar.  Let
  $X'_2 = \{\overline{y}_2^1, \overline{z}_2^1, \dots,
  \overline{y}_2^{4r}, \overline{z}_2^{4r}\}$.

  Consider $i, j \in \{1, \dots, 4r\}$, $u\in \{y_1^i, z_1^i\}$ and
  $v\in \{\overline{y}_2^j, \overline{z}_2^j\}$. From the definition
  of regular crossings, we have the following key observation:  

  \begin{itemize}
  \item If $i >j$, then  $uv\notin E(G)$.
  \item If $i < j$, then $uv\in E(G)$.
  \item Note that if $i=j$, the adjacency between $u$ and $v$ is
    not specified, it depends on whether $u$ is $y_1^i$ or $z_1^i$ and
    on whether $v$ is $\overline{y}_2^j$ or $\overline{z}_2^j$.
  \end{itemize}

  Suppose first that $|Y\cap X'_2| \geq |Z \cap X'_2|$.  Then, there
  exist at least $2r$ sets among
  $\{\overline{y}_2^1, \overline{z}_2^1\}$, \dots,
  $\{\overline{y}_2^{4r}, \overline{z}_2^{4r}\}$ that have a non-empty
  intersection with $Y$.  This means that there exist $2r$ distinct
  integers $i_1$, \dots, $i_{2r}$ such that for every
  $j \in \{1, \dots, 2r\}$, some vertex from
  $\{\overline{y}_2^{i_j}, \overline{z}_2^{i_j}\}$ is in $Y$.  We
  denote such a vertex  by $v_j$ and let
  $Y' = \{v_1, \dots, v_{2r}\}$.  We then set
  $Z' = \{z_1^{i_1}, \dots, z_1^{i_{2r}}\}$.
  
  By definition, $Z'\subseteq Z$ and $Y'\subseteq Y$.  Also, by the
  key observation above, the matrix $M_{G[Y' \cup Z'], Y', Z'}$ is a
  near-triangular $2r \times 2r$ matrix.  By Lemma~\ref{l:nearT}, it
  has rank at least $r$.  This proves that $\rk_G(Y, Z) \geq r$, a
  contradiction.
  
  When $|Z\cap X_2| \geq |Y \cap X_2|$, the proof is similar.
\end{proof}

The following lemma explains how labels propagate in regular triples
(represented in Figure~\ref{fig:regular}).

\begin{lemma}
  \label{l:regProp}
  Let $m\geq 0$, $1\leq i < n$ and $1\leq j \leq s$ be integers.

  \begin{enumerate}
  \item Suppose that $(G, X_i, X_{i+1})$ is a regular parallel triple.  If
    $X_{i, j}$ has label $Y_{m}$, then $X_{i+1, j}$ has label $Y_{m'}$
    with $m'\leq m+1$. 

    If $\overline{X}_{i, j}$ has label $Y_{m}$, then
    $\overline{X}_{i+1, j}$ has label $Y_{m'}$
    with $m'\leq m+1$. 

  \item Suppose that $(G, X_i, X_{i+1})$ is a regular cross triple.  If
    $X_{i, j}$ has label $Y_{m}$, then $\overline{X}_{i+1, j}$ has label $Y_{m'}$
    with $m'\leq m+1$. 

    If $\overline{X}_{i, j}$ has label $Y_{m}$, then $X_{i+1, j}$ has label $Y_{m'}$
    with $m'\leq m+1$. 
  \end{enumerate}
\end{lemma}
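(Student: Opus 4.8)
The plan is to treat all four assertions uniformly by extracting, in each case, a square submatrix of $M_{G, Y, Z}$ that is diagonal, antidiagonal or triangular, and then invoking Lemma~\ref{l:rkTri} together with the hypothesis $\rk_G(Y, Z) < r$. The whole content of the lemma is that the number of positions in $\{2^{j-1}, \dots, 2^j - 1\}$ where the source set and the target set disagree about membership in $Y$ is small, so that $|X_{i,j}\cap Y|$ and its target analogue differ by at most $r$.

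First I would fix the correct position-correspondence between source and target. In every case both sets are indexed by the same range $\ell \in \{2^{j-1}, \dots, 2^j - 1\}$: for a parallel triple the correspondence pairs $x_i^\ell$ (or $\overline{x}_i^\ell$) with $x_{i+1}^\ell$ (or $\overline{x}_{i+1}^\ell$); for a cross triple it pairs $x_i^\ell$ with $\overline{x}_{i+1}^\ell$, and $\overline{x}_i^\ell$ with $x_{i+1}^\ell$. The reason the mirror must be used for crossings is that, restricted to the low index range of $X_{i,j}$, a plain crossing between $X_{i,j}$ and $X_{i+1,j}$ is essentially all zeros, whereas passing to $\overline{X}_{i+1,j}$ turns the crossing condition $\ell + b \ge k+1$ into $\ell \ge \ell'$, a triangular pattern. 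A short computation with $f(\ell) = k - \ell + 1$ shows that, under the chosen correspondence, the adjacency between the paired vertices is: ``$\ell = \ell'$'' for a regular matching (diagonal), ``$\ell \ne \ell'$'' for a regular antimatching (antidiagonal), and ``$\ell \ge \ell'$'' (respectively ``$\ell' \ge \ell$'') for a regular crossing (triangular).

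Next, writing $A$ for the set of positions whose source vertex lies in $Y$ and $B$ for the set of positions whose target vertex lies in $Y$, I would bound $|A \setminus B|$ and $|B \setminus A|$. For $\ell \in A \setminus B$ the source vertex lies in $Y$ and the paired target vertex lies in $Z$, so the corresponding rows and columns of $M_{G, Y, Z}$ form an $|A \setminus B| \times |A \setminus B|$ submatrix with exactly the pattern above; being diagonal, antidiagonal or triangular, its rank is at least $|A \setminus B| - 1$ by Lemma~\ref{l:rkTri}. Since this rank is at most $\rk_G(Y, Z) \le r - 1$, we get $|A \setminus B| \le r$; symmetrically, exchanging the roles of source and target (the submatrix is now the transposed pattern, of equal rank), $|B \setminus A| \le r$. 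Hence $\big||A| - |B|\big| \le r$.

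Finally I would convert this into the statement about labels. If the source has label $Y_m$ then $(m-1)r < |A| \le mr$, so $(m-2)r < |B| \le (m+1)r$, which forces the label of the target to be one of $Y_{\max(m-1,0)}$, $Y_m$, $Y_{m+1}$. The four displayed assertions are the four choices of parallel vs.\ cross and barred vs.\ unbarred source, each disposed of by the same extraction; the barred versions follow verbatim using $\overline{\overline{a}} = a$. I expect the only delicate points to be bookkeeping: verifying that the mirror really yields a genuinely triangular submatrix (with the diagonal occupied, not an empty corner) in the crossing cases, and checking that the weaker rank bound $t-1$ available for antidiagonal matrices is still compatible with the claimed three-value spread of the label.
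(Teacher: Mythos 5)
Your proposal is correct and follows essentially the same route as the paper: both arguments pair the positions of the source and target sets (using the mirror $f(\ell)=k-\ell+1$ exactly when the triple is a crossing), extract from the disagreeing positions a diagonal, antidiagonal or triangular submatrix of $M_{G,Y,Z}$, and invoke Lemma~\ref{l:rkTri} against the hypothesis $\rk_G(Y,Z)<r$. The only difference is organizational: the paper argues by contradiction (a label gap of at least $2$ forces $r+1$ disagreements and hence rank at least $r$), while you bound the symmetric difference $|A\setminus B|, |B\setminus A| \le r$ directly and then read off the label spread, which is just the contrapositive of the same computation.
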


\begin{proof}
  We first deal with the case when $(G, X_i, X_{i+1})$ is a cross
  triple and $X_{i, j}$ has label $Y_{m}$.  Suppose that the
  conclusion does not hold.  This means that $\overline{X}_{i+1, j}$
  has label $Y_{m'}$ with $m' > m+1$.

  So, $|\overline{X}_{i+1, j} \cap Y| > (m+1)r$.  Since
  $|X_{i, j}\cap Y| \leq mr$, we have
  $|\overline{X}_{i+1, j} \cap Y| - |X_{i, j} \cap Y| \geq r+1$.  So
  there exists a subset $S_i$ of $X_{i, j}$ such that $|S_i| = r+1$,
  $S_i=\{x_i^{k_1}, \dots, x_i^{k_{r+1}} \}$, $S_i \subseteq Z$ and
  $\overline{S}_{i+1} = \{\overline{x}_i^{k_1}, \dots,
  \overline{x}_i^{k_{r+1}} \} \subseteq Y$.  The matrix
  $M_{G[S_i\cup \overline{S}_{i+1}], S_i, \overline{S}_{i+1}}$ is lower
  triangular and has rank at least $r$ by Lemma~\ref{l:rkTri}, a
  contradiction to $(Y, Z)$ being a partition of $V(G)$ of rank less
  than~$r$.

  All the other cases are similar.  When  $\overline{X}_{i, j}$ has
  label $Y_{m}$, the proof is symmetric.  When $(G, X_i, X_{i+1})$ is
  a regular matching, we obtain a diagonal matrix, and when $(G, X_i,
  X_{i+1})$ is a regular antimatching, we obtain a antidiagonal
  matrix. 
\end{proof}

The following lemma describes how labels propagate in expanding triples
(see Figure~\ref{fig:expanding}).

\begin{lemma}
  \label{l:speedProp}
  Let $m\geq 0$ and $j$ be integers such that $1\leq j \leq s-1$.

  \begin{enumerate}
  \item Suppose that $(G, X_n, X_1)$ is a parallel expanding triple.  If
    $X_{n, j}$ has label $Y_{m}$, then $X_{1, j+1}$ has label $Y_{m'}$
    with $m'\leq 2m+2$. 
    
  \item Suppose that $(G, X_n, X_1)$ is a cross expanding triple.  If
    $\overline{X}_{n, j}$ has label $Y_{m}$, then $X_{1, j+1}$ has
    label $Y_{m'}$ with $m'\leq 2m+2$.
  \end{enumerate}
\end{lemma}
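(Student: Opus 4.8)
I would prove Lemma~\ref{l:speedProp} by contradiction, in close analogy with the proof of Lemma~\ref{l:regProp}: supposing that $X_{1,j+1}$ has a label outside the stated window, I would locate a $Z$-subset of one of the two sets and a $Y$-subset of the other whose bipartite adjacency matrix is diagonal, antidiagonal or triangular of size at least $r+1$, and then invoke Lemma~\ref{l:rkTri} (or Lemma~\ref{l:nearT}) to obtain rank at least $r$, contradicting $\rk_G(Y,Z)<r$. The one genuinely new feature is that the expanding adjacency between $X_n$ and $X_1$ is two-to-one: by definition each $x_n^\ell$ with $\ell\in\{2^{j-1},\dots,2^j-1\}$ is joined exactly to the pair $\{x_1^{2\ell},x_1^{2\ell+1}\}$, and as $\ell$ runs over the index set of $X_{n,j}$ these pairs partition $X_{1,j+1}$, whose index set is $\{2^j,\dots,2^{j+1}-1\}$. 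In particular $|X_{1,j+1}|=2\,|X_{n,j}|$, and it is this doubling that makes the window here have width four ($Y_{\max(m-2,0)}$ up to $Y_{m+2}$) instead of the width two of Lemma~\ref{l:regProp}.

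For the parallel case, I would split $X_{1,j+1}$ into its \emph{even} children $E=\{x_1^{2\ell}\}$ and its \emph{odd} children $O=\{x_1^{2\ell+1}\}$. Through the adjacency each of $E$ and $O$ is in bijection with $X_{n,j}$, and the restricted triple is a regular matching (if $(G,X_n,X_1)$ is an expanding matching) or a regular antimatching (if it is an expanding antimatching). On each half I can then run verbatim the counting step of Lemma~\ref{l:regProp}: if $\bigl|\,|E\cap Y|-|X_{n,j}\cap Y|\,\bigr|\ge r+1$ there are $r+1$ indices $\ell$ with, say, $x_n^\ell\in Z$ and $x_1^{2\ell}\in Y$, and the corresponding $(r+1)\times(r+1)$ submatrix is diagonal (matching) or antidiagonal (antimatching), hence of rank at least $r$ by Lemma~\ref{l:rkTri}, a contradiction; so $\bigl|\,|E\cap Y|-|X_{n,j}\cap Y|\,\bigr|\le r$, and likewise for $O$. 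Combining the two estimates controls $|X_{1,j+1}\cap Y|=|E\cap Y|+|O\cap Y|$ in terms of $|X_{n,j}\cap Y|$.

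For the cross case I would first apply the bar to $X_n$: writing $\overline{x}_n^\ell=x_n^{k-\ell+1}$, the expanding crossing condition $2p+q\ge 2k+2$ turns into the threshold $q\ge 2\ell$, so $\overline{x}_n^\ell$ sees exactly those children $x_1^q$ of $X_{1,j+1}$ with $q\ge 2\ell$. This is the expanding analogue of the triangular pattern used in Lemma~\ref{l:reg4blocks}. Assuming $X_{1,j+1}$ has too large a label, I would select a staircase $\ell_1<\dots<\ell_t$ of $Z$-indices in $\overline{X}_{n,j}$ together with $Y$-children $q_1<\dots<q_t$ satisfying $2\ell_i\le q_i<2\ell_{i+1}$, so that the submatrix $\bigl[\,q_{i'}\ge 2\ell_i\,\bigr]$ is triangular; once $t\ge r+1$ this has rank at least $r$ by Lemma~\ref{l:rkTri}, the required contradiction, and the reverse inequality is obtained symmetrically.

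The step I expect to be the real obstacle is exactly the effect of the two-to-one expansion on the upper end of the window. Each half in the parallel case (and, correspondingly, each of the two children governed by a fixed parent in the cross case) loses an additive $r$, and a naive summation of the two halves doubles the main term: it bounds $|X_{1,j+1}\cap Y|$ by roughly $2\,|X_{n,j}\cap Y|+2r$, which only places $X_{1,j+1}$ near label $Y_{2m}$ rather than $Y_{m+2}$. Obtaining the sharp upper bound of the statement therefore seems to require more than the $X_n$--$X_1$ adjacency in isolation; the natural fix is to feed in the global hypothesis $\rk_G(Y,Z)<r$ together with the block bound of Lemma~\ref{l:reg4blocks} and the regular propagation of Lemma~\ref{l:regProp}, so as to exclude the \emph{parent-heavy} configurations (almost all of $X_{n,j}$ in $Y$, with both children of each such parent also in $Y$) that are precisely the ones realizing the doubling. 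Pinning down this simultaneous bookkeeping for both children of each parent, and separately checking the clamped boundary values $\max(m-2,0)$ and $\max(m-1,0)$ at the low end, is where the work concentrates.
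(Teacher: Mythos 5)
Your overall mechanism is the same as the paper's: assume the conclusion fails, turn the discrepancy in $Y$-counts into an $(r+1)\times(r+1)$ diagonal, antidiagonal or triangular submatrix with rows in one side of the partition and columns in the other, and invoke Lemma~\ref{l:rkTri} against $\rk_G(Y,Z)<r$. Your two reductions are sound: in the parallel case the even children and the odd children of $X_{1,j+1}$ each form a regular matching (resp.\ antimatching) with $X_{n,j}$, and in the cross case the bar computation $2(k-\ell+1)+q\geq 2k+2\Leftrightarrow q\geq 2\ell$ is correct. These do establish the lower half of the window: if $X_{1,j+1}$ had label $Y_{m'}$ with $m'<\max(m-2,0)$, then more than $r$ parents in $Y$ have a child in $Z$ (at least $P_Y-C_Y/2>r(m+1)/2\geq r+1$ of them), and choosing one $Z$-child per such parent yields the required rank-$r$ submatrix, the interleaving $2p_i\leq q_i<2p_{i+1}$ making it triangular in the cross case.

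The obstacle you flag at the upper end is not a defect of your approach; it is a genuine flaw in the paper's own proof. At that point the paper deduces, from $|X_{1,j+1}\cap Y|-|\overline{X}_{n,j}\cap Y|\geq 2r+1$, that ``there exists a subset $\overline{S}_n$ of $\overline{X}_{n,j}$ such that $|\overline{S}_n|=2r+1$, $\overline{S}_n\subseteq Z$ and $S_1\subseteq Y$.'' That inference (a surplus of $Y$-vertices on one side produces that many $Z$-to-$Y$ matched pairs) is the one that is valid for two sets of \emph{equal} size in bijection, as in Lemma~\ref{l:regProp}; here $|X_{1,j+1}|=2|X_{n,j}|$ and it fails, since the surplus can consist entirely of children of parents that are themselves in $Y$, producing no $Z$-parent with a $Y$-child at all. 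Your conclusion $|X_{1,j+1}\cap Y|\leq 2|X_{n,j}\cap Y|+2r$, i.e.\ a label of roughly $2m+2$, is exactly what the local structure gives, and no more.

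Moreover, no amount of extra bookkeeping under the hypotheses of Section~\ref{s:propagate} can close your gap, because the upper half of the statement is false as written. Take $Z=\{x_2^1\}$ and $Y=V(G)\setminus Z$, so $\rk_G(Y,Z)\leq 1<r$, and any $j\leq s-1$ with $2^{j-1}>4r$ (such $j$ exist for the values of $s$ used in Sections~\ref{s:evenC} and~\ref{s:oddC}, e.g.\ $j=s-1$). Then $X_{n,j}$ (and likewise $\overline{X}_{n,j}$) has label $Y_m$ with $m=\lceil 2^{j-1}/r\rceil\geq 5$, while $X_{1,j+1}$ has label $Y_{m'}$ with $m'=\lceil 2^{j}/r\rceil\geq 2m-2>m+2$. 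So the lemma (and similarly Lemma~\ref{l:skewProp}) must be restated with the doubling window you actually proved, say $\max(m-2,0)\leq m'\leq 2m+2$; the induction in Sections~\ref{s:evenC} and~\ref{s:oddC} then yields a label bound of order $2^{8r}n$ rather than $(n+1)(8r+1)$, which still suffices for Theorem~\ref{th:main} provided inequality~(\ref{ineq2}) (resp.~(\ref{ineqSkew2})) is strengthened so that $2^q$ dominates this larger bound times $r$ --- harmless, since $q$ is chosen after $n$ and $r$. In short, your proposal is incomplete exactly where the paper's proof is incorrect, and your diagnosis of the doubling phenomenon is the right one; only your hope that global hypotheses could restore the $+2$ window should be abandoned.
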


\begin{proof}
  We first deal with the case when $(G, X_n, X_{1})$ is a cross
  expanding triple and $\overline{X}_{n, j}$ has label $Y_{m}$.  Suppose
  that the conclusion does not hold.  This means that $X_{1, j+1}$ has
  label $Y_{m'}$ with $m'> 2m+2$.

  So, $|X_{1, j+1} \cap Y| > 2(m+1)r$.  Since
  $|\overline{X}_{n, j} \cap Y| \leq mr$, there exists a subset
  $\overline S_n$ of $\overline X_{n, j}$ such that
  $|\overline S_n| = r+1$, $\overline S_n \subseteq Z$,
  $\overline S_n=\{\overline x_n^{k_1}, \dots, \overline x_n^{k_{r+1}}
  \}$ and for every $\ell \in \{1, \dots, r+1\}$,
  $\{x_1^{2k_\ell}, x_1^{2k_\ell +1}\} \cap Y \neq \emptyset$.  For
  every $\ell \in \{1, \dots, r+1\}$, we choose
  $y^\ell \in Y \cap \{x_1^{2k_\ell}, x_1^{2k_\ell +1}\}$, and set
  $S_1 = \{y^1, \dots, y^{r+1}\}$. The matrix
  $M_{G[\overline S_n\cup {S}_{1}], \overline S_n, {S}_{1}}$ is lower
  triangular and has rank at least $r$ by Lemma~\ref{l:rkTri}, a
  contradiction to $(Y, Z)$ being a partition of $V(G)$ of rank less
  than~$r$.  See Figure~\ref{f:prL63} for an illustration.

  \begin{figure}
    \begin{center}
      \includegraphics[width=4.3cm]{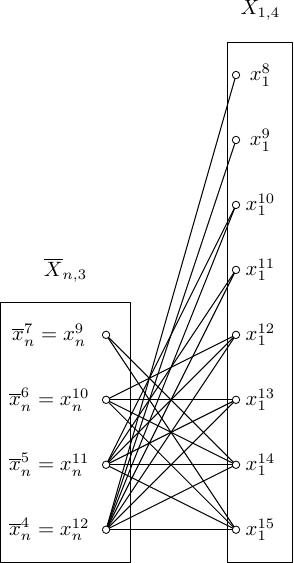}
    \end{center}
    \caption{Proof of Lemma~\ref{l:speedProp} when $k=15$ and $j=3$\label{f:prL63}}
  \end{figure}

  All the other cases are similar.  
  %
  %When $\overline{X}_{i, j}$ has label $Y_{m}$, the proof is symmetric.  
  %
  When $(G, X_n, X_1)$ is
  an expanding matching, we obtain a diagonal matrix, and when
  $(G, X_n, X_1)$ is a expanding antimatching, we obtain a
  antidiagonal matrix.
\end{proof}

The following lemma describes how labels propagate in skew expanding
triples (see Figure~\ref{fig:skew}). We omit the proof since it is
similar to the previous one.

\begin{lemma}
  \label{l:skewProp}
  Let $m\geq 0$ and $j$ be integers such that $1\leq j \leq s-1$.

  \begin{enumerate}
  \item Suppose that $(G, X_n, X_{1})$ is a parallel skew expanding
    triple.

    If $X_{n, j}$ has label $Y_{m}$, then $X_{1, j+1}$ has label
    $Y_{m'}$ with $m'\leq 2m+2$.

    If $\overline{X}_{n, j}$ has label $Y_{m}$, then
    $\overline{X}_{1, j+1}$ has label $Y_{m'}$ with $m'\leq 2m+2$.

  \item Suppose that $(G, X_n, X_{1})$ is a skew expanding crossing.

    %If $X_{i, j}$ has label $Y_{m}$, then $\overline{X}_{i+1, j}$ has label $Y_{m'}$ with $m'\leq 2m+2$.

    %If $\overline{X}_{i, j}$ has label $Y_{m}$, then $X_{i+1, j+1}$ has label $Y_{m'}$ with $m'\leq 2m+2$.
   If $X_{n, j}$ has label $Y_{m}$, then $\overline{X}_{1, j+1}$
   has label $Y_{m'}$ with $m'\leq 2m+2$.

   If $\overline{X}_{n, j}$ has label $Y_{m}$, then $X_{1, j+1}$
   has label $Y_{m'}$ with $m'\leq 2m+2$.

  \end{enumerate}
\end{lemma}

\section{Even carousels}
\label{s:evenC}

In this section, $n\geq 3$ and $r\geq 2$ are fixed integers. We prove
Theorem~\ref{th:main} for even carousels. We therefore look
for an integer $s$ such that every even carousel of
order $s$ on $n$ sets has rankwidth at least~$r$.  We define $s$ as
follows. Let $q$ be an integer such that:

\begin{align}
2^{q+8r-1} &\geq 10r(n+1)(8r+2)2^{8r+2}\label{ineq2}
\end{align}

Clearly $q$ exists.  Let $s=q+8r + 1$.  We now consider an even
carousel $G$ of order $s$ on $n$ sets.  To prove that it has rankwidth at
least~$r$, it is enough by Lemma~\ref{l:rwBalanced} to prove that
every partition $(Y, Z)$ of $V(G)$ with rank less than $r$ is
unbalanced, so let us consider $(Y, Z)$ a partition of $V(G)$ of rank
less than $r$.

\begin{lemma}
  \label{l:ExpandingYzero}
  There exists $t\in \{q, \dots, q + 8r-1\}$ such that $X_{1, t}$ has
  label $Y_0$ or $Z_0$.
\end{lemma}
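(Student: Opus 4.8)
The plan is to combine Lemma~\ref{l:reg4blocks} with a pigeonhole count on the sets $X_{1,t}$.

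First I would check that the indices $t \in \{q, \dots, q+8r-1\}$ form exactly $8r$ consecutive integers, each satisfying $1 \le t \le q+8r-1 = s-2 < s$, so that every $X_{1,t}$ is one of the $s$ dyadic intervals partitioning $X_1$, and these $8r$ sets are pairwise disjoint intervals of $X_1$. By Lemma~\ref{l:reg4blocks}, $X_1$ has fewer than $8r$ blocks with respect to $(Y,Z)$; write $b \le 8r-1$ for the number of blocks. Since consecutive blocks lie in different parts, the number of \emph{transitions} — indices $p$ with $x_1^p$ and $x_1^{p+1}$ in different parts — equals $b-1 \le 8r-2$.

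Now I would argue by counting. A set $X_{1,t}$ fails to be monochromatic (that is, is neither contained in $Y$ nor in $Z$) only if it contains a transition in its interior, i.e.\ some $p$ with both $x_1^p, x_1^{p+1} \in X_{1,t}$. As the $8r$ sets $X_{1,q}, \dots, X_{1,q+8r-1}$ are pairwise disjoint, each transition lies in the interior of at most one of them, so at most $b-1 \le 8r-2$ of these sets can fail to be monochromatic. Hence at least two of them are monochromatic; choosing such a $t$, the set $X_{1,t}$ lies in $Z$ (label $Y_0$) or in $Y$ (label $Z_0$), which is the desired conclusion.

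There is no genuine obstacle beyond this bookkeeping: the argument is pure pigeonhole, and the only delicate point is that the factor $8r$ appearing both in Lemma~\ref{l:reg4blocks} and in the index range $\{q, \dots, q+8r-1\}$ is exactly what makes the two counts match. In particular, inequality~(\ref{ineq2}) and the specific value of $q$ are not used here; they are needed only in the later balance argument, where one requires the selected $X_{1,t}$ (and the sets derived from it by propagation) to occupy a large enough fraction of each $X_i$.
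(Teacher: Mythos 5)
Your proof is correct and follows essentially the same route as the paper: both rest on Lemma~\ref{l:reg4blocks} together with the observation that the $8r$ pairwise disjoint intervals $X_{1,q},\dots,X_{1,q+8r-1}$ cannot each contain a colour transition. The paper phrases this as a contradiction (all $8r$ sets non-monochromatic would force at least $8r$ blocks), while you count transitions directly and even get two monochromatic sets, but the underlying pigeonhole argument is identical.
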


\begin{proof}
  Otherwise, for every $t\in \{q, \dots, q + 8r - 1\}$, the set
  $X_{1,t}$ is an interval of $X_1$ that must contain elements of $Y$
  and elements of $Z$. Hence, the $8r$ sets $X_{1, q}$, \dots,
  $X_{1, q + 8r - 1}$ show that the interval
  $S= \cup_{j=q}^{q + 8r-1} X_{1, j}$ has at least $8r$ blocks, a
  contradiction to Lemma~\ref{l:reg4blocks}.
\end{proof}

Up to symmetry, we may assume that there exists
$t\in \{q, \dots, q + 8r-1\}$ such that $X_{1, t}$ has label $Y_0$.
We denote by $\widetilde{X}_{i, j}$ the set that is equal to $X_{i, j}$
if $X_i$ is a top set, and that is equal to $\overline{X}_{i, j}$ if
$X_i$ is a bottom set. 

The idea is now to apply Lemmas~\ref{l:regProp} and~\ref{l:speedProp}
repeatedly to show that for all $i\in \{1, \dots, n\}$ and for all
$j\in \{t, \dots, s\}$, the set $\widetilde{X}_{i, j}$ has a label
$Y_m$ with $m$ not too large.  Since $s-t+1 \leq 8r+2$, there are at
most $(n-1)(8r+2)$ applications of Lemma~\ref{l:regProp} that each
augments the index of the label by at most $1$.  There are at most
$8r+2$ applications of Lemma~\ref{l:speedProp} that, in the worst
case, each multiplies the index by $2$ and adds $2$ to it.  In total,
in the worst case, the index is augmented by $1$ at most $(n+1)(8r+2)$
times and multiplied by $2$ at most $8r+2$ times.  Considering that
the additions come first and the multiplications come next will
provide an upper bound to the final result.  The propagation of labels
in an even carousel is illustrated in Figure~\ref{f:propE}. The boxes
represents the sets $X_{i,j}$.  Consider the first box, say box 1, of
$X_1$ that receives label~$Y_0$ by Lemma~\ref{l:ExpandingYzero}. Then,
the block with number 2 has label $Y_0$ or $Y_1$, and so on until all
blocks have a bound on the index of their label (here after nine
steps). We therefore obtain:

\begin{align}
m &\leq (n+1)(8r+2)2^{8r+2}\label{ineqEvenm}
\end{align}

\begin{figure}
  \begin{center}
    \includegraphics[width=4.3cm]{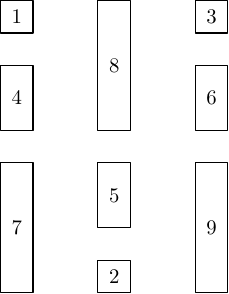}
  \end{center}
  \caption{Propagation of labels in an even carousel\label{f:propE}}
\end{figure}

Let us now prove that $(Y, Z)$ is unbalanced.  To do so, we focus on
the sets $\widetilde{X}_{i, j}$ for $i\in \{1, \dots, n\}$ and
$j\in \{s-1, s\}$.  We denote by $X$ their union.  By elementary
properties of powers of~2, we have $|X| > \frac{3}{4}|V(G)|$.  Also,
by~\eqref{ineqEvenm}, each of the sets $X_{i, j}$ has label $Y_m$ with
$m\leq (n+1)(8r+2)2^{8r+2}$.  Hence, for $i\in \{1, \dots, n\}$ and
$j\in \{s-1, s\}$, $|X_{i, j}\cap Y| \leq r(n+1)(8r+2)2^{8r+2}$.  So,
by inequality~\eqref{ineq2}, the proportion of vertices from $Z$ in
$X$ is at least $\frac{9}{10}$. Hence,
$$|Z| \geq \frac{3}{4} \times \frac{9}{10}|V(G)| = \frac{27}{40}|V(G)|
> \frac{2}{3}|V(G)|.$$

The partition $(Y, Z)$ is therefore unbalanced as claimed.

\section{Odd carousels}
\label{s:oddC}

In this section, $n\geq 3$ and $r\geq 2$ are fixed integers. We prove
Theorem~\ref{th:main} for odd carousels. We therefore look for an
integer $s$ such that every odd carousel of order $s$ on $n$ sets has
rankwidth at least~$r$.  We define $s$ as follows. Let $q$ be an
integer such that:

\begin{align}
2^{q+16r-1} &\geq 10r(n+1)(16r+2)2^{16r+2}\label{ineqSkew2}
\end{align}

Clearly $q$ exists.  Let $s=q+16r + 1$.  We now consider a skew
carousel $G$ of order $s$ on $n$ sets.  To prove that it has rankwidth at
least~$r$, it is enough by Lemma~\ref{l:rwBalanced} to prove that
every partition $(Y, Z)$ of $V(G)$ with rank less than $r$ is
unbalanced, so let us consider $(Y, Z)$ a partition of $V(G)$ of rank
less than $r$.

\begin{lemma}
  \label{l:SkewYzero}
  There exists $t\in \{q, \dots, q + 16r - 2\}$ such that
  $X_{1, t}\cup X_{1, t+1}$ has label $Y_0$ or $Z_0$.
\end{lemma}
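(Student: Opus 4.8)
The plan is to mirror the proof of Lemma~\ref{l:ExpandingYzero} from the even case, adapting it to the ``doubled'' anchor set $X_{1,t}\cup X_{1,t+1}$ and to the enlarged range $\{q,\dots,q+16r-1\}$. First I would argue by contradiction and suppose that for every $t\in\{q,\dots,q+16r-1\}$ the set $X_{1,t}\cup X_{1,t+1}$ has neither label $Y_0$ nor label $Z_0$. Since label $Y_0$ means the set lies entirely in $Z$ and label $Z_0$ means it lies entirely in $Y$, this says exactly that each such doubled set meets both $Y$ and $Z$. I would also record two elementary facts: $X_{1,t}$ and $X_{1,t+1}$ are consecutive intervals of $X_1$ (their index ranges are $\{2^{t-1},\dots,2^t-1\}$ and $\{2^t,\dots,2^{t+1}-1\}$), so each $X_{1,t}\cup X_{1,t+1}$ is itself an interval of $X_1$; and all these sets are well-defined, since the largest subscript that occurs, namely $q+16r=s-1$, does not exceed $s$.

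The key step is to pass to pairwise disjoint doubled sets. I would select only the values $t=q,q+2,\dots,q+16r-2$, obtaining $8r$ intervals $J_i=X_{1,\,q+2i}\cup X_{1,\,q+2i+1}$ for $i\in\{0,\dots,8r-1\}$. These $J_i$ tile the interval $S=\bigcup_{j=q}^{q+16r-1}X_{1,j}$, they are pairwise disjoint and consecutive, and by the contradiction hypothesis each of them meets both $Y$ and $Z$. Consequently each $J_i$ contains an adjacent pair of oppositely coloured vertices lying strictly inside $J_i$; because distinct $J_i$'s are separated by a boundary (the adjacent pair straddling two consecutive $J_i$'s is internal to neither), these colour changes are distinct. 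Hence $S$, and therefore $X_1$, has at least $8r$ blocks w.r.t.\ $(Y,Z)$, contradicting Lemma~\ref{l:reg4blocks}.

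The one genuinely new point compared with the even case is this overlap issue: the $16r$ natural doubled sets each share a half with their neighbours, so one cannot read off $16r$ blocks from them directly. Restricting to every other one restores disjointness while still producing $8r$ blocks, which is precisely the threshold excluded by Lemma~\ref{l:reg4blocks}; this is why the range is taken of length $16r$ rather than $8r$. Everything else is a routine transcription of the even-case argument, so I expect no further difficulty.
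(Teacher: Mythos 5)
Your proof is correct and follows essentially the same route as the paper: the paper also passes to the $8r$ pairwise disjoint doubled intervals $X_{1,q}\cup X_{1,q+1}$, $X_{1,q+2}\cup X_{1,q+3}$, \dots, $X_{1,q+16r-2}\cup X_{1,q+16r-1}$ and derives at least $8r$ blocks in the interval $S=\cup_{j=q}^{q+16r-1}X_{1,j}$, contradicting Lemma~\ref{l:reg4blocks}. Your explicit discussion of the overlap issue and of why the colour changes in disjoint intervals are distinct is a faithful (and slightly more careful) transcription of exactly that argument.
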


\begin{proof}
  Otherwise, for every $i\in \{q, \dots, q+16r-2\}$ the set
  $X_{1,i} \cup X_{1, i+1}$ is an interval of $X_1$ that must contain
  elements of $Y$ and elements of $Z$. Hence, the $8r$ sets
  $X_{1,q} \cup X_{1, q+1}$, $X_{1,q+2} \cup X_{1, q+3}$, \dots,
  $X_{1,q+16r-2} \cup X_{1, q+16r-1}$ show that the interval of $X_1$
  defined as $S= \cup_{j=q}^{q+16r-1} X_{1, j}$ has at least $8r$
  blocks, a contradiction to Lemma~\ref{l:reg4blocks}.
\end{proof}

Up to symmetry, we may assume that there exists
$t\in \{q, \dots, q + 16r-2\}$ such that $X_{1, t}\cup X_{1, t+1}$ has
label $Y_0$.  We denote $\widetilde{X}_{i, j}$ the set that is equal
to $X_{i, j}$ if $X_i$ is a top set, and that is equal to
$\overline{X}_{i, j}$ if $X_i$ is a bottom set.

The idea is now to apply Lemmas~\ref{l:regProp} and~\ref{l:skewProp}
repeatedly to show that all $i\in \{1, \dots, n\}$ and all
$j\in \{t, \dots, s\}$, the sets $X_{i, j}$ and
$\overline{X}_{i, j}$ both have a label $Y_m$ with $m$ not too large.
Since $s-t+1 \leq 16r+1$, there are at most $(n-1)(16r+2)$ applications of
Lemmas~\ref{l:regProp} that each augments the index of the label by at
most $1$.  There are $16r+2$ applications of Lemma~\ref{l:skewProp}
that, in the worst case each adds 2 to the index and multiplies it by
2.  In total, in the worst case, the index is augmented by 1 at most
$(n+1)(16r+2)$ times and multiplied by $2$ at most $16r+2$ times (and
considering that the additions come first and the multiplications come
next provides an upper bound to the final result).  The propagation of
labels in an odd carousel is illustrated in Figure~\ref{f:propO}. The boxes represent the sets $X_{i,j}$ or $\overline{X}_{i,j}$. 
There are boxes 1 and and 1' of $X_1$ that receives label~$Y_0$ by
Lemma~\ref{l:SkewYzero}. Then, the boxes with number 2 and 2' have labels $Y_0$
or $Y_1$, and so on until all boxes have a bound on the index of
their label (here after nine
steps). We therefore obtain:

\begin{align}
m &\leq (n+1)(16r+2)2^{16r+2}\label{ineqOddm}
\end{align} 

 \begin{figure}
    \begin{center}
      \includegraphics[width=4.3cm]{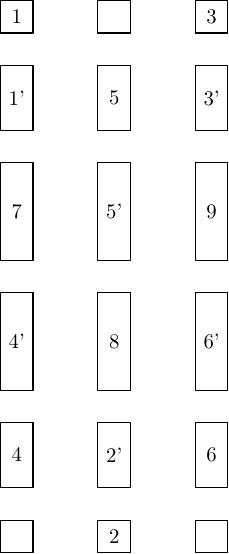}
    \end{center}
    \caption{Propagation of labels in an odd carousel\label{f:propO}}
  \end{figure}

Let us now prove that $(Y, Z)$ is unbalanced.  To do so, we focus on
the $4n$ sets ${X}_{i, j}$, $\overline{X}_{i, j}$ for
$i\in \{1, \dots, n\}$ and $j\in \{s-1, s\}$.  We denote by $X$ their
union.  By elementary properties of powers of~2, we have
$|X| > \frac{3}{4}|V(G)|$.  Also, by~\eqref{ineqOddm}, each of the set $X_{i, j}$ or
$\overline{X}_{i, j}$ has label $Y_m$ with $m\leq (n+1)(16r+2)2^{16r+2}$.
Hence, for $i\in \{1, \dots, n\}$ and $j\in \{s-1, s\}$,
$|X_{i, j}\cap Y| \leq r(n+1)(16r+2)2^{16r+2}$ (and a similar inequality
holds for $\overline{X}_{i, j}$).  So, by inequality~\eqref{ineqSkew2},
the proportion of vertices from $Z$ in $X$ is at least
$\frac{9}{10}$. Hence,
$$|Z| \geq \frac{3}{4} \times \frac{9}{10}|V(G)| = \frac{27}{40}|V(G)|
> \frac{2}{3}|V(G)|.$$

The partition $(Y, Z)$ is therefore unbalanced as claimed.

\section{Applications}
\label{s:app}

In this section, we explain how to tune our carousels  to obtain the results
announced in the introduction.

\subsection*{Split permutation graphs with Dilworth number 2} 

In~\cite{DBLP:journals/gc/KorpelainenLM14}, it is proved that there
exist split permutation graphs with Dilworth number 2 and arbitrarily
large rankwidth.  Theorem~3
in~\cite{DBLP:journals/gc/KorpelainenLM14} states that {\it the class
  of split permutation graphs is precisely the class of split graphs
  of Dilworth number at most 2}. Therefore, it is enough for us to study
split graphs with Dilworth number~2.

We now provide an explicit construction.  Consider an even
carousel on 4 sets $X_1$, $X_2$, $X_3$, $X_4$ and the following
additional properties:

\begin{itemize}
\item $X_1\cup X_3$ is a clique.
\item $X_2 \cup X_4$ is a stable set.
\item The triples $(G, X_1, X_2)$, $(G, X_2, X_3)$, $(G, X_3, X_4)$ 
  are regular  crossings. 
\item The triple $(G, X_4, X_1)$ is an expanding crossing. 
\end{itemize}

It is clear that graphs constructed in this way are split graphs
because $X_1\cup X_3$ is a clique and $X_2\cup X_4$ is a stable set.
To check that they have Dilworth number~2, it is enough to notice that
for all vertices $x, y\in X_1\cup X_2$, either
$N(x)\setminus N[y] =\emptyset$ or $N(y)\setminus N[x] = \emptyset$,
and for all vertices $x, y\in X_3\cup X_4$, either
$N(x)\setminus N[y] =\emptyset$ or $N(y)\setminus N[x] = \emptyset$.

\subsection*{Rings}

We consider a carousel on $n \geq 3 $ sets, all are cliques,
with all triples being crossings. It a ring and it is easy to check
that all its induced cycles have length~$n$. If $n$ is even, this is an
even carousel; and if $n$ is odd, this is an even-hole-free odd
carousel. For a fixed $n$, this gives rings with arbitrarily large
rankwidth.

\subsection*{Flexibility of carousels}

Observe that in this section, we never use matchings and antimatchings,
that were included in the definition for possible later use of carousels. More
generality could be allowed (for instance by not forcing $(G, X_1, X_2)$ to be
crossing, and instead forcing some $(G, X_i, X_{i+1})$ to be
crossing). This might be useful, but we felt that it would make the
explanation too complicated.

\section{Open questions}
\label{s:open}

It would be nice to characterize the class of induced subgraphs of rings on $n$ sets by
forbidden induced subgraphs.  For $n=3$ this seems to be
non-trivial. It might be easier for larger~$n$. On
Figure~\ref{f:nonRing}, a list of obstructions for $n=3$ is given but
we do not know whether it is complete.

\begin{figure}
  \begin{center}
  \parbox[c]{3cm}{\includegraphics[width=.7cm]{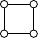}}
  \rule{1cm}{0cm}
  \parbox[c]{3cm}{\includegraphics[width=.7cm]{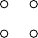}}
  \rule{1cm}{0cm}
  \parbox[c]{3cm}{\includegraphics[width=3cm]{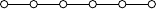}}

  \rule{0cm}{1.5cm}
  
  \parbox[c]{3cm}{\includegraphics[width=1.5cm]{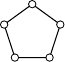}}
  \rule{1cm}{0cm}
  \parbox[c]{3cm}{\includegraphics[width=1.5cm]{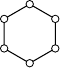}}
  \rule{1cm}{0cm}
  \parbox[c]{3cm}{\includegraphics[width=2cm]{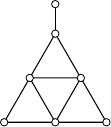}}

  \rule{0cm}{1.5cm}
  
  \parbox[c]{3cm}{\includegraphics[width=4cm]{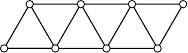}}

  \end{center}
  \caption{Graphs that are not induced subgraphs of rings on 3 sets,
    and that are minimal with respect to this
    property\label{f:nonRing}}
\end{figure}

\begin{figure}
  \begin{center}
  \includegraphics[width=2.5cm]{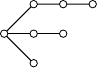}
  \end{center}
  \caption{$S_{1, 2, 3}$\label{f:s123}}
\end{figure}

Let $S_{1, 2, 3}$ be the graph represented on Figure~\ref{f:s123}, and
let $C$ the class of graphs that contain no triangle and no
$S_{1, 2, 3}$ as an induced subgraph. A famous open
question~\cite{DBLP:journals/jcss/DabrowskiDP17} is to determine
whether graphs in $C$ have bounded rankwidth. We tried many ways to
tune our construction, but each time, there was either a triangle or
an $S_{1, 2, 3}$ in it.  So, to the best of our knowledge, our
construction does not help to prove that the rankwidth is unbounded
for this class.  Similarly, whatever tuning we try, it seems that our
construction does not help to solve the open problems
from~\cite{DBLP:journals/dam/BrandstadtDHP16,
  DBLP:journals/jgt/BrandstadtDHP17,
  DBLP:journals/jcss/DabrowskiDP17}. This might be because we did not
try hard enough to tune our construction, or because it is not the
right approach. And of course, the rankwidth might be bounded for all
these classes.

In~\cite{DBLP:journals/order/DaligaultRT10}, a conjecture is given,
stating that in each class of graphs that is closed under taking induced
subgraphs and where the rankwidth is unbounded, there should be an
infinite sequence of graphs $G_1, G_2, \dots$ such that for each pair
of distinct integers $i, j$, $G_i$ is not an induced subgraph of
$G_j$.  This was later disproved,
see~\cite{DBLP:journals/jct/LozinRZ18}, but we wonder whether
carousels may provide  new counter-examples.  This is difficult to
check, because it not clear whether ``big'' carousels contain 
smaller ones as induced subgraphs.

\section{Acknowledgement}

We are grateful to O{-}joung Kwon and Sang{-}il Oum for useful
discussions. We also thank an anonymous referee who found a mistake in
the first version of Lemma~\ref{l:speedProp}.

This work was supported by the Canadian Tri-Council Research Support Fund. The author  C.T.H. was supported by individual NSERC Discovery Grant.

%\bibliographystyle{plain}
%\bibliography{../../Bibliographie/articles} 
%\end{document}

\end{document}